\documentclass[a4paper,10pt, twocolumn]{article}

\usepackage{algorithm}
\usepackage{algorithmic}

\usepackage{amssymb,amsmath,amsthm,url}

\topmargin 0pt \advance \topmargin by -\headheight \advance
\topmargin by -\headsep

\textheight 8.9in

\oddsidemargin 0pt \evensidemargin \oddsidemargin \marginparwidth
0.5in

\textwidth 6.5in


\DeclareMathOperator*{\E}{\mathbb{E}}

\DeclareMathOperator*{\Directprod}{\bigoplus}
\DeclareMathOperator*{\Union}{\bigcup}

\newcommand{\e}{\epsilon}

\newtheorem{theorem}{Theorem}[section]

\newtheorem{lemma}[theorem]{Lemma}

\newtheorem{claim}[theorem]{Claim}
\newtheorem{definition}[theorem]{Definition}

\newcommand{\polylog}{\mathrm{polylog}}

\newcommand{\set}[1]{\left\{#1\right\}}
\newcommand{\card}[1]{\left|#1\right|}

\title{On constant factor approximation for earth mover distance over doubling
metrics}
\author{
  Shi Li\\
  Dept. of Computer Science\\
  Princeton University\\
  35 Olden Street\\
  Princeton, NJ, 08540\\
  shili@cs.princeton.edu
}
\date{February 9, 2010}

\begin{document}

\maketitle

\begin{abstract}
Given a metric space $(X,d_X)$, the earth mover distance between two
distributions over $X$ is defined as the minimum cost of a bipartite matching
between the two distributions. The doubling dimension of a metric $(X, d_X)$ is
the smallest value $\alpha$ such that every ball in $X$ can be covered by
$2^\alpha$ ball of half the radius. A metric (or a sequence of metrics) is
called doubling precisely if its doubling dimension is bounded.  We study the
efficient algorithms for
approximating earth mover distance over doubling precisely metrics. 

Our first result is a near linear time (in the size of the $X$) algorithm for
estimating EMD over doubling metric $X$, with a $O(\alpha_X)$ approximation
ratio, where $\alpha_X$ is the doubling dimension of $X$. Given a metric $(X,
d_X)$, we can
use $\tilde O(n^2)$ preprocessing time to create a
data structure of size $\tilde O(n^{1 + \e})$, such that subsequent EMD queries
can be
answered in $\tilde O(n)$ time, with approximation ratio $O(\alpha_X/\e)$.

Our second result is an encoding scheme, which is a weaker form of sketching.
In an encoding scheme, distributions are encoded, such that the EMD
between two distributions can be estimated in \emph{sub linear} time, given the
encodings of the two distributions. In particular, given $(X, d_X)$, by using
$\tilde O(n^2)$ preprocessing time, every subsequent
distribution $\mu$ can be encoded into $F(\mu)$ in $\tilde O(n^{1 + \e})$ time.
The query for EMD between $\mu$ and $\nu$ can be answered in $\tilde O(n^\e)$
time, with approximation ratio $O(\alpha_X/\e)$, given the two encodings
$F(\mu)$ and $F(\nu)$.

The encoding scheme has immediate applications. In a 2-player game where 1
player knows $\mu$ and the other knows $\nu$, there is a communication protocol
with small communication complexity, through which the two players can
approximate the EMD between $\mu$ and $\nu$.   Another application is distance
oracle, where we are given a metric $(X, d_X)$ and $s$ distributions $\mu_1,
\mu_2, \cdots, \mu_s$ over $X$, we can use $\tilde O(n^2 + sn^{1+\e})$
preprocessing time, creating a data structure of size $\tilde O(n^{1 + \e} +
sn)$, such the query for EMD between $\mu_i$ and $\mu_j$ can be answered in
$\tilde O(n^{\e})$ time,  with approximation ratio $O(\alpha_X/\e)$.
\end{abstract}

\section{Introduction}

Given a finite metric $(X, d_X)$ and two multi-sets $A, B$ of points in $X$ with
$|A| = |B| = N$, the \emph{Earth-Mover distance}(or \emph{EMD}) between $A$ and
$B$ is defined as the minimum cost of a perfect matching, with respect to the
cost function $d_X$, i.e:
$$EMD_X(A, B) = \min_{\pi:A \to B}\sum_{p \in A}\|d_X(a, \pi(a)\|$$
where $\pi$ ranges over all one-to-one mappings.

The EMD metric is of significant importance in many applications. For example,
in computer vision, EMD is used as a measurement for dissimilarity between two
images. The idea is to represent an image as a distribution on features (such as
colors and color spectrum), with an underlying metric. Then, the EMD of two
distributions of features can be used for navigating image databases and image
retrieve(\cite{RGT97}, \cite{RTG98a}, \cite{RTG98b}, \cite{CG97}, \cite{CG99},
\cite{IT03}).

The exact distance can be computed in $O(N^3)$ time for general underlying
metrics, using Hungarian method(\cite{Law76}). For metric
supported by a sparse graph, there is a scaling algorithm due to\cite{GT89}
that runs in $O(\sqrt{|V|}|E|\log(|V|N))$ time. Better algorithms are known for
special metrics. EMD over a 2-dimensional plane can be computed in
$O(N^{5/2}\log^{O(1)}N)$ time, due to Vaidya\cite{Vai88}. This time is further
improved to $O(N^{2 + \delta})$, for any $\delta > 0$, by Agarwal etc.
\cite{AES95}.

Even for the special metric such as 2-dimensional plane, the exact EMD requires
super-quadratic time, too expensive in many applications.  This
computational bottleneck motivates faster approximation algorithms. For the
EMD over 2-dimensional plane, Agarwal and Varadarajan \cite{AV99} showed
$O(N^{3/2}/\e^2\log^5(N/\e))$ time $(1 + \e)$-approximation, then they gave an
improved algorithm with $N^{1 + \delta}\log^{O(1)}N$-time and
$\log(1/\delta)$-approximation(\cite{AV04}). Indyk \cite{Ind07} further
improved the running time to $O(N\log^{O(1)}N)$ for constant approximation.
Following \cite{Ind07}, Andoni etc. \cite{ABIW09} provided a sketching
algorithm with constant approximation.

There has also been special interest for embedding EMD into normed spaces.
Charikar \cite{Cha02} showed that EMD over $(X, d_X)$ can be embedded into
$l_1$ space with distortion $\alpha$, if $(X, d_X)$ can be embedded into
distribution of dominating trees with distortion $\alpha$. Then,
by \cite{FRT03}, which showed $\alpha$ is at most $O(\log n)$ for a $n$-point
metric, EMD over $(X, d_X)$ with can be embedded into $l_1$ with distortion
$O(\log n)$. Embedding into $l_1$ can give approximation algorithms, with the
ratio equal to the distortion of the embedding. However, the embedding of
EMD into $l_1$ has limitations : it has been showed in \cite{NS06} that
embedding of EMD over grid $[n]^2$ must incur a distortion of at least
$\Omega(\sqrt{\log n})$.

In this paper, we are interested in approximating EMD over more general metrics:
metrics with bounded doubling dimensions.  The doubling dimension of a metric
$(X, d_X)$ is the smallest $\alpha$ such that every ball in $X$ can be covered
by $2^{\alpha}$ balls of half the radius. This is a richer family of metrics
than the constant-dimension Euclidean space. It can be shown, for example, for
any fixed $p$, the doubling dimension of
$d$-dimensional $l_p$ is roughly $d$. On the other hand, \cite{GKL03} showed a
family of metrics $(G_k, d_k)$ with bounded doubling dimension, whose embedding
into $l_2$ must incur a distortion of $\Omega(\sqrt{\log|G_k|})$. For a finite
metric space $X$, $\alpha_X \leq \log |X|$. For many problems involving metrics,
better results are known if the metrics have bounded doubling dimension.
\cite{KLMN04} obtained an $O(\sqrt{\alpha_X\log n})$ distortion embedding of
$X$ into $l_2$, an improvement over Bourgain's theorem(\cite{Bou85}) if the
metric has low doubling dimension. \cite{AFHK04} give better approximation
algorithms for metric labeling and 0-extension for $\alpha$-decomposable
metrics(if a metric has doubling dimension $\alpha$, it is
$O(\alpha)$-decomposable).

\subsection{Our contribution}

We study approximation algorithm for EMD over doubling metrics, a generalization
of low dimensional Euclidean spaces. As far as we know, this is the first
work to consider the EMD over this family of metrics. Our algorithm is a
generalization of the algorithm in \cite{Ind07} for approximating EMD over
planar grids. We defined a metric, called ``sibling-linked hierarchical
well-separated tree'' metric, that the doubling metric can be embedded into with
constant distortion. This metric has a nice property that the EMD over it is the
sum of EMD over smaller metrics, which allows us to do ``importance sampling''.
In this paper, we also promoted a scheme called ``encoding scheme'', a weaker
form of the sketching scheme. In the encoding scheme, distributions are encoded
in some form, and the EMD of two distributions can be approximated in
\emph{sub-linear} time, if the encodings of the two distributions are given.
The sub-linear time estimation algorithm does the importance sampling in a
binary-search way. using only logarithmic time.

Compared to the Euclidean spaces, there
is an issue for doubling metrics on how the underlying metric is given. Reading
the whole metric requires time quadratic in $n$, the size of the metric, while
we're aiming for algorithms with time near linear in $n$. We avoid this
bottleneck by preprocessing, in which our algorithms read the metric $(X, d_X)$
and create a data structure of small size. By doing this, subsequent queries
for EMD between $\mu$ and $\nu$ can be answered in time near linear in $n$.

Throughout, we use $\alpha_X$ to denote the doubling dimension of $X$. We assume
each coordinate in a distribution can be represented in $\polylog(n)$ bits. In
particular, for a fixed constant number $a$, we define 
$$\mathcal{P}_X = \set{\mu:X \to \set{0, \frac{1}{n^a}, \frac{2}{n^a}
,\cdots, 1} \Big|\sum_{p\in X}\mu_p = 1}$$
This restriction is only for simplicity of demonstration and doesn't affect our
algorithms.

Our first result is an almost linear time, constant approximation algorithm for
EMD over doubling metrics.

\begin{theorem}[Approximation algorithm]
 \label{Theorem:approximation-algorithm-1}
Let $(X, d_X)$ be a metric space with $|X| = n$. 
Let $0 < \e < 1$ be fixed. Given $(X,d_X)$ and $\alpha_X$, there
is an algorithm which, by using $\tilde O(n^2)$ preprocessing
time to create a data structure of size $\tilde O(n^{1+\e})$, for any
subsequent query for EMD between two distributions $\mu, \nu \in \mathcal{P}_X$,
outputs in $\tilde O(n)$ time a random estimation $D$ satisfying
\begin{equation}
\label{Equation:D-approximates-EMD}
 EMD_X(\mu, \nu)\leq D \leq O\left(\frac{\alpha_X}{\e}\right) EMD_X(\mu, \nu)
\end{equation}
with probability at least 2/3. The
probability is over the randomness for the preprocessing as well as the
estimation.
\end{theorem}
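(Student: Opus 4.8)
The plan is to follow the overall strategy of Indyk's planar-grid algorithm \cite{Ind07}, but replace the grid decomposition by a random hierarchical decomposition tailored to doubling metrics. First I would build, in the preprocessing phase, a ``sibling-linked hierarchical well-separated tree'' (HST) metric $(T, d_T)$ into which $(X, d_X)$ embeds with distortion $O(\alpha_X)$. Concretely, one takes a random hierarchical net decomposition of $X$: at scale $2^i$ (for $i$ ranging over $O(\log n)$ levels, after rescaling so distances lie in $[1, \poly(n)]$) one picks a net and assigns each point to a cluster, using the padded-decomposition property of doubling metrics (a doubling metric is $O(\alpha_X)$-decomposable, as noted in the introduction via \cite{AFHK04}). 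This yields a laminar family of clusters; making it an HST with geometrically decreasing edge lengths gives a dominating tree metric with expected distortion $O(\alpha_X)$. The ``sibling links'' are extra bookkeeping so that for each internal node we know the relative positions of its children; the key structural fact I would establish (or quote from an earlier section) is that $EMD_T(\mu,\nu)$ decomposes additively as $\sum_{v} 2^{\ell(v)} \cdot \bigl(\text{amount of mass that must cross between the subtrees of } v\text{'s children}\bigr)$ — i.e. EMD over the HST is literally a sum, over all internal nodes, of a local one-dimensional transportation cost among the children.

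Given this additive decomposition, the estimation algorithm for a query $(\mu,\nu)$ works by importance sampling. For each internal node $v$ the ``local cost'' is $2^{\ell(v)}$ times the $\ell_1$-imbalance of $(\mu - \nu)$-mass among $v$'s children; summing the absolute net mass discrepancies over all children of all nodes and weighting by $2^{\ell(v)}$ gives exactly $EMD_T(\mu,\nu)$ (up to the additive structure of HST EMD). The algorithm computes, in $\tilde O(n)$ time, the total weight $W = \sum_v 2^{\ell(v)} \cdot (\text{total } |\mu-\nu| \text{ mass within } v)$, which is an easily-computable upper bound, and then draws $\tilde O(1)$ nodes $v$ with probability proportional to their contribution, estimating for each sampled $v$ the fraction of its mass that is actually ``balanced'' (can be matched within a child) versus ``unbalanced''. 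Averaging these estimates and multiplying by $W$ gives an unbiased-up-to-constant estimator of $EMD_T(\mu,\nu)$, and hence, via the distortion bound, of $EMD_X(\mu,\nu)$ up to $O(\alpha_X)$. The role of $\e$ is in the data-structure size: to support the per-query near-linear time and to store the tree-embedding information compactly one uses $O(1/\e)$ independent HSTs / a branching factor tradeoff, costing $\tilde O(n^{1+\e})$ space and an extra $O(1/\e)$ factor in the approximation, so the final ratio is $O(\alpha_X/\e)$. Standard median-of-$O(\log(1/\delta))$ repetitions boost the success probability, but since we only need $2/3$ here a constant number of independent runs suffices.

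The main obstacle, I expect, is controlling the variance of the importance-sampling estimator so that $\tilde O(1)$ samples suffice for constant multiplicative error: naively the local cost at a node can be an arbitrarily small fraction of the upper bound $2^{\ell(v)} \cdot (\text{mass within }v)$, so a single sample could badly underestimate. The fix is the ``binary search'' idea mentioned in the introduction — rather than sampling a single node, one descends the tree, at each step recursing into the child subtree in proportion to its share of the remaining unbalanced mass, which effectively performs the importance sampling level by level and keeps the estimator concentrated; bounding the depth by $O(\log n)$ controls the running time, and a martingale/variance argument along the root-to-leaf path bounds the error. A secondary technical point is that the distortion guarantee of the random HST is only in expectation (or with constant probability), so one must argue that with the claimed probability \emph{both} the embedding is good \emph{and} the sampling estimate is accurate; a union bound over $O(1)$ independent trials handles this, and one should be careful that $EMD_X \le EMD_T$ always (tree dominates), giving the lower bound in \eqref{Equation:D-approximates-EMD} deterministically, while only the upper bound needs the probabilistic distortion.
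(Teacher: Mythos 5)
Your overall architecture (random hierarchical decomposition, additive decomposition of the tree EMD, importance sampling as in \cite{Ind07}) matches the paper, but there is a genuine gap at the central step: you assert that the random hierarchical net decomposition, turned into an HST with geometrically decreasing edge weights, is a dominating tree metric with expected distortion $O(\alpha_X)$. This is false; even the planar grid (doubling dimension $O(1)$) cannot be embedded into a distribution of dominating trees with distortion $o(\log n)$, a fact the paper itself recalls. Consequently the sibling links cannot be ``extra bookkeeping'': they are precisely the device that beats the tree lower bound. In the paper's SLHST the two children of the least common ancestor of $p$ and $q$ are joined by an edge of length essentially $d_X(c_{Y'},c_{Y''})$, so the leaf-to-leaf shortest path pays the true distance plus only $O(2^{h-i})$ of ``climbing'' error; collapsing $a=\lceil \e\log n/\alpha_X\rceil$ consecutive scales into one level turns the usual $O(\log n)$ FRT loss into $O((\log n)/a)=O(\alpha_X/\e)$ and makes the degree $n^{O(\e)}$, which is where both the approximation ratio and the $\tilde O(n^{1+\e})$ space come from (not from storing $O(1/\e)$ independent trees).

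This error propagates into your estimator. Because the SLHST is not a tree, $EMD_T$ does not decompose into terms of the form $2^{\ell(v)}$ times an $\ell_1$ mass imbalance among the children --- that is the pure-HST formula, and using it would reintroduce the $\Omega(\log n)$ loss. It decomposes into genuine small EMD instances $EEMD_v$ over the sibling-link metrics $d_v$ on $\Lambda_v$, each on $n^{O(\e)}$ points. Hence the importance-sampling weights are not computable by summing mass discrepancies: the paper approximates each local $EEMD_v$ to within $O(\log n)$ by embedding each $d_v$ into $O(\log n)$ random dominating trees and taking the minimum of the resulting tree EMDs, so that $O(\log n)$ samples suffice by lemma \ref{Lemma:importance-sampling}, and then evaluates each sampled $EEMD_v$ exactly with the Hungarian algorithm in $n^{O(\e)}=o(n)$ time. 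Finally, the binary-descent sampling you invoke to control variance is not needed for this theorem (a single linear-time pass computes all the weights); it is the ingredient for the sublinear-query encoding scheme of theorem \ref{Theorem:encoding-scheme}.
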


The preprocessing time is quadratic in the size of the metric, which is
unavoidable since we have to read the whole metric. However, the
algorithm only reads the metric once, after which
subsequent EMD queries can be answered in almost linear time in $n$.  The
probability $2/3$ can be amplified to $1 -
\varepsilon$, by repeating the algorithm $O(\log (1/\varepsilon))$ times.

Unlike \cite{ABIW09}, our algorithm for theorem
\ref{Theorem:approximation-algorithm-1} is not sketching-based. Instead, we
provide a weaker scheme, which we call ``encoding scheme". In the
encoding scheme,  distributions are encoded to $l_1$ vectors by an
encoding function $F$. Given two encodings $F(\mu)$ and $F(\nu)$, there is a
\emph{sub
linear} time algorithm approximating $EMD_X(\mu, \nu)$. The difference
between an encoding and a sketch is, an encoding is not necessarily shorter than
its pre-image.

\begin{theorem}[Encoding scheme]
\label{Theorem:encoding-scheme}
Let $X, d_X, n, \e$ be as in theorem \ref{Theorem:approximation-algorithm-1}.
Given $(X, d_X)$ and $\alpha_X$, there is an algorithm, which uses $\tilde
O(n^2)$
preprocessing time to create a data structure of size $\tilde O(n^{1 + \e})$,
such that for any subsequent query for $EMD_X(\mu, \nu)$, it can perform the
following 3 steps:
\begin{enumerate}
\item computes an encoding $F(\mu)$ of size $\tilde O(n)$ for $\mu$ in
$\tilde O(n^{1 + \e})$ time, only reading $\mu$ and the data structure;
\item computes $F(\nu)$ similarly;
\item outputs a random number $D$ satisfying (\ref{Equation:D-approximates-EMD})
with
probability 2/3, in $\tilde O(n^\e)$ time, only reading $F(\mu), F(\nu)$ and the
data structure.
\end{enumerate}
\end{theorem}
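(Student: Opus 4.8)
The plan is to collapse everything onto a single metric embedding plus an additive structural identity, and then estimate the resulting sum by importance sampling implemented as a binary search down the tree. First I would use the $\tilde O(n^2)$ preprocessing to embed $(X,d_X)$ into a sibling-linked hierarchical well-separated tree metric $T$ with distortion $O(\alpha_X)$ (the embedding machinery developed earlier in the paper, possibly randomized via random nets/shifts); since we may assume the aspect ratio of $X$ is $\poly(n)$, $T$ has depth $O(\log n)$ and essentially linear internal structure, and it is what the $\tilde O(n^{1+\e})$ data structure stores, the extra $n^\e$ slack being spent on the per-node branching data that makes sampling fast. By the defining property of this metric, for all $\mu,\nu$
\[ EMD_T(\mu,\nu) \;=\; \sum_{v\in T} w_v\cdot EMD_{M_v}(\mu_v,\nu_v), \]
where $M_v$ is the small metric attached to node $v$ (its children under the sibling-link distances), $w_v$ is a precomputed scale, and $\mu_v,\nu_v$ are the local distributions obtained by aggregating the subtree masses of the children of $v$. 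Since the embedding is non-contractive and $O(\alpha_X)$-expansive and $EMD$ is monotone in the underlying distances, $EMD_T(\mu,\nu)$ already lies between $EMD_X(\mu,\nu)$ and $O(\alpha_X)\,EMD_X(\mu,\nu)$, so it suffices to estimate the sum above within a further $O(1/\e)$ factor and then scale the two-sided estimate up by the constant to obtain the stated one-sided bound.

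Next I would define the encoding. $F(\mu)$ is the $\ell_1$ vector with one block of coordinates per node $v$: the block stores (a) the aggregated masses $\mu(T_c)$ over the children $c$ of $v$, laid out so that $EMD_{M_v}(\mu_v,\nu_v)$ can be recomputed from the blocks of $F(\mu)$ and $F(\nu)$ at $v$, and (b) cumulative sums of a precomputed per-child upper bound $U_c$ on the contribution of the subtree below $c$, which support weighted sampling by binary search. Summing block sizes over all nodes gives $\tilde O(n)$ coordinates, and $F(\mu)$ is computable by one bottom-up pass plus the per-node auxiliary sums in $\tilde O(n^{1+\e})$ time. The key accounting lemma to prove here is
\[ \|F(\mu)-F(\nu)\|_1 \;=\; \Theta\big(EMD_T(\mu,\nu)\big), \]
i.e.\ that the $w_v$-normalized per-node blocks reconstruct the additive identity up to constants.

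For the query I would run importance sampling on the identity: draw $O(\polylog n)$ nodes $v$, each with probability $p_v$ proportional to $w_v$ times its contribution upper bound (read off from $F(\mu),F(\nu)$), by a weighted random walk from the root that at each of the $O(\log n)$ levels uses binary search on the stored cumulative sums; for each sampled $v$ compute $EMD_{M_v}(\mu_v,\nu_v)$ from its block and output the average of $w_v\,EMD_{M_v}(\mu_v,\nu_v)/p_v$. Each sample costs $\tilde O(n^\e)$: although a node may have up to $n$ children, the walk and the local evaluation touch only an $n^\e$-sized portion of the relevant block — this is precisely where the $\e$ parameter is spent — and $O(\polylog n)$ samples suffice because the choice of $p_v$ via the overcounting upper bound keeps the estimator's variance within $O(\alpha_X/\e)$ of the squared mean, so Chebyshev yields the two-sided estimate with probability $2/3$, and the telescoping losses through the hierarchy multiply out to the claimed $O(\alpha_X/\e)$ approximation.

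The main obstacle is twofold. First, establishing the additive identity for the \emph{sibling-linked} HST: a plain HST gives the exact formula $EMD_T=\sum_v \ell_{e_v}\,|\mu(T_v)-\nu(T_v)|$, but the sibling edges turn each local term into a genuine transportation cost on $M_v$, and one must show the cross terms still telescope up to constants. Second, and more delicate, is simultaneously arranging that (i) the sampling proxy $U$ is both computable from the encoding and within $O(\alpha_X/\e)$ of the true per-node contribution, and (ii) both the weighted-walk step and the local $EMD_{M_v}$ evaluation run in sublinear $\tilde O(n^\e)$ time despite unbounded node degrees — this is what forces the particular layout of $F$ (prefix sums, degree reduction governed by $1/\e$) and is the real content of the theorem beyond the near-linear-time approximation result.
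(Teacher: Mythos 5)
Your high-level architecture (SLHST embedding, additive decomposition $EMD_T(\mu,\nu)=\sum_{v\in U_T}EEMD_v(\hat\mu_v,\hat\nu_v)$, importance sampling realized as a binary descent, local exact EMD on $n^{O(\e)}$-degree nodes) matches the paper, but the proposal has a genuine gap at the one step that is the actual content of the theorem: how the subset weights driving the binary descent are obtained from two \emph{separately computed} encodings in sublinear time. You propose storing in $F(\mu)$ ``cumulative sums of a precomputed per-child upper bound $U_c$ on the contribution of the subtree below $c$.'' The contribution of a node is $EEMD_v(\hat\mu_v,\hat\nu_v)$, which depends jointly and non-linearly on $\mu$ and $\nu$. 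If $U_c$ is precomputed from $\mu$ alone (or from the metric alone), it is not a valid importance-sampling proxy: sampling proportional to it gives no bound of the form $p_v\geq q_v/\gamma$ with small $\gamma$, so the estimator's variance is uncontrolled and $O(\polylog n)$ samples do not suffice. If instead $U_c$ depends on both distributions, it cannot live in $F(\mu)$ and $F(\nu)$ separately, and prefix sums do not help: prefix sums of nonnegative quantities computed from $\mu$ and from $\nu$ do not subtract to prefix sums of the quantities computed from $\mu-\nu$ (absolute values and minima do not commute with subtraction). Relatedly, your accounting identity $\|F(\mu)-F(\nu)\|_1=\Theta(EMD_T(\mu,\nu))$ is not something the construction needs or delivers with constant factors; the per-node proxies necessarily carry an $O(\log n\log\log n)$ distortion from embedding each small metric $d_v$ into trees.

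The paper's resolution, which is missing from your proposal, is to make every weight an $\ell_1$ norm of a \emph{linear} function of the distribution. Each $EEMD_v$ is approximated (via the average over $O(|\Lambda_v|\log|\Lambda_v|)$ dominating trees of \cite{CCG98} and Charikar's tree-EMD-to-$\ell_1$ map) by $|h_v(\hat\mu_v-\hat\nu_v)|_1$ for a fixed linear $h_v$; the weight of any subset $S$ of nodes is then the $\ell_1$ norm of the concatenation $\bigoplus_{v\in S}h_v(\hat\mu_v-\hat\nu_v)$. The encoding stores, for every node $S$ of a fixed binary sampling tree over $U_T$, a $k$-dimensional Cauchy sketch $g$ of that concatenation. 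Linearity of $g\circ h$ is what lets the query algorithm form the sketch of $\mu-\nu$ coordinate-by-coordinate from $F(\mu)$ and $F(\nu)$, and the median estimator gives each subset weight up to $1\pm 1/(10\log N)$ so the $\log N$ multiplicative losses along the descent compound to a constant. Without this (or an equivalent linear-sketch mechanism), step 3 of the theorem cannot be carried out in $\tilde O(n^\e)$ time from separately computed encodings. A minor further point: the $O(\alpha_X/\e)$ distortion is already incurred by the SLHST embedding itself (through the level-collapsing with period $\lceil\e\log n/\alpha_X\rceil$), not split as $O(\alpha_X)$ from the embedding plus $O(1/\e)$ from the sampling as you describe.
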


Notice that compared to the algorithm in theorem
\ref{Theorem:approximation-algorithm-1}, the algorithm in theorem
\ref{Theorem:encoding-scheme} requires $O(n^{1 + \e})$ time to approximate the
EMD. The encoding scheme implies the following two theorems.

\begin{theorem}[Communication protocol]
\label{Theorem:communication-protocol}
Let $X, d_X, n, \e$ be as in theorem
\ref{Theorem:approximation-algorithm-1}. Consider a game with two players
$Alice$ and $Bob$, in which $Alice$ knows $\mu \in \mathcal{P}_X$
and $Bob$ knows $\nu \in \mathcal{P}_X$. There
exists a communication protocol with complexity $\tilde
O(n^\e)$, through which $Alice$ and $Bob$ can output a number $D$ satisfying
(\ref{Equation:D-approximates-EMD}) with probability at least 2/3.
\end{theorem}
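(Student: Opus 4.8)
The plan is to reduce to the encoding scheme of Theorem~\ref{Theorem:encoding-scheme}. The only obstacle to a direct use is that the encoding scheme needs a \emph{shared} data structure, produced by a randomized preprocessing that reads all of $(X,d_X)$; this data structure has size $\tilde O(n^{1+\e})$, far too large to be communicated. We get around this with public randomness: since $Alice$ and $Bob$ both know $(X,d_X)$, $n$ and $\e$, they can each run the preprocessing locally on the same public random string $r$, and thereby both obtain \emph{the same} data structure $\mathcal D=\mathcal D(r)$ with no communication at all.

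Given $\mathcal D$, $Alice$ computes the encoding $F(\mu)$ from $\mu$ and $\mathcal D$ (step~1 of Theorem~\ref{Theorem:encoding-scheme}) and $Bob$ computes $F(\nu)$ from $\nu$ and $\mathcal D$ (step~2); this needs no communication, since step~1 reads only $\mu$ and $\mathcal D$ and step~2 reads only $\nu$ and $\mathcal D$. It remains to carry out step~3, the estimation algorithm, which outputs a number $D$ satisfying (\ref{Equation:D-approximates-EMD}) with probability $2/3$ in $\tilde O(n^\e)$ time while reading only $F(\mu)$, $F(\nu)$ and $\mathcal D$. $Alice$ runs this algorithm herself, using a fresh block of the public random string for its internal coin tosses. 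Whenever it requests a word of $F(\mu)$ or of $\mathcal D$, $Alice$ already has it; whenever it requests a word of $F(\nu)$, $Alice$ sends the index to $Bob$, who returns the corresponding word. Since step~3 runs in $\tilde O(n^\e)$ time it probes at most $\tilde O(n^\e)$ words, each of $\polylog(n)$ bits (every coordinate is representable in $\polylog(n)$ bits by assumption), so this exchange costs $\tilde O(n^\e)$ bits. Finally $Alice$ sends the resulting $D$, a single $\polylog(n)$-bit number, to $Bob$, and both players output it.

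This protocol is a faithful simulation of the three-step procedure of Theorem~\ref{Theorem:encoding-scheme} on the inputs $\mu$ and $\nu$, so $D$ has exactly the distribution guaranteed there and hence satisfies (\ref{Equation:D-approximates-EMD}) with probability at least $2/3$ over the public randomness; the total communication is $\tilde O(n^\e)$. If a private-randomness protocol is preferred, one applies Newman's theorem, which turns any public-coin protocol of communication $c$ on inputs of length $m$ into a private-coin protocol of communication $c+O(\log m)$ with error still bounded away from $1$; here $m=\poly(n)$, so the overhead is only $O(\log n)$ bits and the bound $\tilde O(n^\e)$ is unaffected. I expect the only delicate point to be the bookkeeping that step~3's sublinear running time genuinely bounds the number of probed coordinates of $F(\nu)$, and hence the communication; everything else is a routine simulation.
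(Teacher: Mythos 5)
Your proposal is correct and follows essentially the same route as the paper: both players locally build the encoding-scheme data structure and their own encodings $F(\mu)$, $F(\nu)$, and then simulate the sublinear estimation algorithm (algorithm \ref{Algorithm:approximate-emd}) by exchanging only the $\tilde O(n^\e)$ words it probes. The paper states this in one line; your elaboration of the shared-randomness point and the probe-counting is just a more careful write-up of the same argument.
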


Theorem \ref{Theorem:communication-protocol} suggests that proving a good
lower bound for sketching using general communication complexity is impossible.

\begin{theorem}[Distance oracle]
\label{Theorem:distance-oracle}
Let $X, d_X, n, \e$ be stated as in theorem
\ref{Theorem:approximation-algorithm-1}. Given $(X, d_X)$, $\alpha_X$ and $s$
distributions $\mu_1, \mu_2, \cdots, \mu_s \in \mathcal{P}_X$, there is an
algorithm which
uses preprocessing time $\tilde O(n^2 + sn^{1+\e})$ to construct a data
structure of size $\tilde O(n^{1 + \e} + sn)$, and for any subsequent
query for EMD between
$\mu = \mu_i$ and $\nu = \mu_j$, outputs a number $D$ in $\tilde O(n^\e)$ time, 
satisfying (\ref{Equation:D-approximates-EMD}) with probability at least 2/3.
\end{theorem}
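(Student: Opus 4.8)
\section*{Proof proposal for Theorem~\ref{Theorem:distance-oracle}}

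The plan is to obtain Theorem~\ref{Theorem:distance-oracle} as an almost immediate corollary of the encoding scheme of Theorem~\ref{Theorem:encoding-scheme}, by precomputing all of the encodings during the preprocessing phase. First I would run the preprocessing of Theorem~\ref{Theorem:encoding-scheme} on $(X,d_X)$ and $\alpha_X$; this costs $\tilde O(n^2)$ time and produces a data structure $\mathcal{D}$ of size $\tilde O(n^{1+\e})$. Then, using $\mathcal{D}$, I would apply step~1 of that theorem to each of the $s$ input distributions, computing and storing $F(\mu_1),\dots,F(\mu_s)$; each $F(\mu_i)$ takes $\tilde O(n^{1+\e})$ time and has size $\tilde O(n)$, so this adds $\tilde O(sn^{1+\e})$ to the preprocessing time and $\tilde O(sn)$ to the stored data. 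The oracle's data structure is the pair $\bigl(\mathcal{D},(F(\mu_i))_{i=1}^{s}\bigr)$, of total size $\tilde O(n^{1+\e}+sn)$, built in $\tilde O(n^2+sn^{1+\e})$ time, as required.

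To answer a query $(i,j)$, I would simply invoke step~3 of Theorem~\ref{Theorem:encoding-scheme} on the inputs $F(\mu_i)$, $F(\mu_j)$ and the data structure $\mathcal{D}$, drawing fresh randomness for this step; this runs in $\tilde O(n^{\e})$ time and returns a number $D$. The correctness claim~(\ref{Equation:D-approximates-EMD}) for this $D$ follows because the joint distribution of $\bigl(\mathcal{D},F(\mu_i),F(\mu_j),\text{query randomness}\bigr)$ produced by the oracle is identical to the one arising when the encoding scheme is run once on the pair $(\mu=\mu_i,\nu=\mu_j)$: steps~1 and~2 of the encoding scheme already share the single data structure $\mathcal{D}$, and the two encodings are otherwise formed with independent randomness, which is exactly how the oracle produces $F(\mu_i)$ and $F(\mu_j)$. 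Hence Theorem~\ref{Theorem:encoding-scheme} guarantees that~(\ref{Equation:D-approximates-EMD}) holds with probability at least $2/3$ for that fixed pair, which is precisely what Theorem~\ref{Theorem:distance-oracle} asks.

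The only real subtlety --- and the step I would be most careful about --- is the bookkeeping of randomness when a single preprocessed data structure is reused for many queries. Because the success probability in Theorem~\ref{Theorem:encoding-scheme} (like that in Theorem~\ref{Theorem:approximation-algorithm-1}) is taken jointly over the preprocessing and the query randomness, once $\mathcal{D}$ and the encodings are frozen there may be ``bad'' preprocessing outcomes on which the conditional per-query success probability drops below $2/3$. This does not affect the statement as phrased, since the marginal probability over all randomness is still at least $2/3$ for each individual pair; so no extra work is needed for Theorem~\ref{Theorem:distance-oracle} itself. I would, however, note that if one wanted the stronger guarantee that all $\binom{s}{2}$ queries succeed simultaneously, a naive union bound would fail, and one would instead amplify the preprocessing (repeating it $O(\log n)$ times and keeping a copy that is ``good'' with probability $1-n^{-\Omega(1)}$) together with $O(\log s)$-fold repetition of step~3 --- but I would mention this rather than carry it out.
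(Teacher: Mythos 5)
Your proposal matches the paper's own argument: the paper likewise precomputes and stores all $s$ encodings during preprocessing and answers each query by running Algorithm~\ref{Algorithm:approximate-emd} on the stored pair, with the same time and space accounting. Your remark about amplifying the preprocessing $O(\log s)$ times to get a simultaneous guarantee for all pairs is also exactly the paper's closing observation, so the proposal is correct and essentially identical in approach.
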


\subsection{Preliminaries}
 The doubling dimension $\alpha_X$ of a metric $(X, d_X)$ is defined as 
the minimum $t$ such that every ball in $X$ can be covered 
using $2^t$ balls of half the radius.  Define $R_X = \min{p\ in X} \max{q
\in X} d_X(p, q)$ to be the radius of $X$. When $X$ is clear from the context,
we may omit the subscript, using $\alpha, R$ instead. For a point $p \in X$ and
a real number $r$, we use $Ball(p, r) = \set{q : d_X(p ,q)\leq r}$ to denote the
ball of radius $r$ centered at $p$. 

For two distributions $\mu, \nu \in \mathcal{P}_X$, the earth mover
distance (EMD) between $\mu$ and $\nu$ is defined as 
$$EMD_X(\mu, \nu) = \min_{\pi : X \times X \to \mathbb{R}}\sum_{p, q \in
X}d_X(p, q)\pi(p, q)$$
where $\pi$ ranges over all functions $\pi$ satisfying 
\begin{eqnarray*}
 & & \forall p, q\in X, \pi(p, q) \geq 0;\\
 & & \forall p \in X, \sum_{q \in X}\pi(p, q) = \mu_p;\\
 & & \forall q \in X, \sum_{p \in X}\pi(p, q) = \nu_q.
\end{eqnarray*}

We'll use $l_1$ to denote the set of $L_1$ vectors of finite dimension, i.e
$l_1 = \bigcup_{i=1}^{\infty}\mathbb{R}^i$, equipped with $L_1$ norm. We'll use
$\oplus$ to denote the direct product operation. So, $x_1 \oplus x_2$ is the
direct product of $x_1$ and $x_2$,  and $\Directprod_{i=1}^s x_s$ is
the direct product $x_1, x_2, \cdots x_s$. 

The Cauchy distribution $\mathbb{C}(x_0, \gamma)$ is the distribution with the
following probability density function :
$f(x) = \frac{1}{\pi}\left(\frac{\gamma}{(x - x_0)^2 + \gamma^2}\right)$, where
$\gamma$ is called the scale parameter of the distribution. The cumulative
distribution function of $\mathbb{C}(x_0, \gamma)$ is $F(x) =
\frac{1}{\pi}\arctan\left(\frac{x - x_0}{\gamma}\right) + \frac12$. Cauchy
distribution is 1-stable distribution, meaning that the sum of $n$ independent
variables from $\mathbb{C}(0, 1)$ is $\mathbb{C}(0, n)$.

Throughout, we'll use $\tilde O()$ notion to hide a factor of $\polylog(n)$.

\section{Overview of the algorithms}
\label{Section:overview}

Our algorithms for theorem \ref{Theorem:approximation-algorithm-1} and
theorem \ref{Theorem:encoding-scheme} are only slightly different.  We'll
describe the major components of the algorithms here, and give the details in
following sections.

In the preprocessing, we read $(X, d_X)$ and decompose
it into a distribution of so called \emph{sibling-linked
hierarchical well separated trees}, or SLHSTs, with distortion
$O(\alpha/\e)$, where $\e$ is some parameter. The name is a little confusing
since SLHST is actually not a tree (recall that even grids can not be embedded
into distribution of trees with
distortion $o(\log n)$).  A SLHST is constructed by adding links
connecting children of the same vertex (or siblings) to a base HST.  In a
HST,
there is a unique path connecting two leaves. While in SLHST, we take a shortcut
for the this path: the two children of the last common
ancestor of the two leaves are directly connected. A perfect analogy for this
metric is the postal service
system. In the postal service system with a hierarchy of post offices, a package
is sent from a terminal to a local post office, and then to one with higher
level in the hierarchy, and then to a even higher level, until it can be sent
to an office of the same level. Then the package is sent to lower
levels along the hierarchy system, until it reaches the destination.

Our embedding into distribution of SLHSTs is dominating, and has small
distortion. Then, by a similar argument as the one in \cite{Cha02},
the EMD over $X$ can be approximated by the EMD over a
random SLHST from the distribution. A nice property
that a SLHST has is, the EMD over it can be
represented as a sum of $O(n)$ EMDs over smaller sub metrics. Recall a special
case in
\cite{Ind07}, the EMD over a grid of size $n$ is the sum of EMDs over grids
of smaller size. 

The decomposition into smaller EMDs allows us to use the ``importance sampling"
technique, which is also used in \cite{Ind07}. Suppose we want to compute a sum
$Z = \sum_{i=1}^N Z_i$, but it's too expensive to compute all the $Z_i$s. We can
choose a term $Z_i$ with probability roughly proportional to how much it
contributes to the sum. Then $Z_i$ divided by this probability is used as the
estimation for $Z$. The expectation of the value is $Z$, while the deviation
depends on how well the sampling distribution approximates the weight
distribution. 
Applying importance sampling to our case, we need a technique to approximate
each EMD over a small metric up to a reasonable factor. This can be
done using the
embedding of metrics to distribution of dominating trees and then computing the
EMD over the trees(\cite{FRT03}, \cite{Cha02}).

The above techniques provided us the main elements needed to prove theorem
\ref{Theorem:approximation-algorithm-1}, while not enough to prove theorem
\ref{Theorem:communication-protocol}. To get a sub-linear time estimation
algorithm, we need to do the binary sampling more carefully. We invented
a ``binary importance sampling" process, which allows us to do the
importance sampling in logarithmic time.  The high level idea is to sample in a
binary-search way. Recall in an importance sampling, we want to compute $Z =
\sum_{i = 1}^N Z_i$. We maintain a possible set $S$ of terms, initially $[N]$.
Each time the set $S$ is divided into two equal subsets $S_1$ and $S_2$, and
then replaced with one of two subsets, with probability proportional
to the weight of the subset. In $\log N$ steps, we end up with a single term in
$S$, which is the term selected by the algorithm.  This method requires us to
estimate the sum of a subset of terms. In our case, each term can be
approximated by the $L_1$ norm of a vector, so the total weight of a subset
is the sum of $L_1$ norms, which is equivalent to one $L_1$ norm.
We can use the sketch scheme in \cite{KOR98} to approximate the $L_1$
norm. For every potentially
possible set $S$, we have a sketch for the concatenation of vectors in
$S$. The sketch is linear, a crucial property allowing us to encode two
distributions
separately. We'll describe the encoding function and the binary sampling
algorithm in
section \ref{Section:binary-importance-sampling}.

\section{Sibling-linked hierarchical well-separated trees}

We introduce the definition of $k$-HST from \cite{Bar96} and then based on this
definition, we define a $k$-SLHST.
\begin{definition}[$k$-Hierarchical well-separated tree(\cite{Bar96})]
A $k$-hierarchical well-separated tree($k$-HST) is defined as a rooted
weighted
tree with the following properties:
\begin{enumerate}
\item The edge weight from any node to each of its children is the same;
\item The edge weights along any path from the root to a leaf are decreasing by
a factor of at least $k$.
\end{enumerate}
\end{definition}

We call $k$ the scale-decreasing factor of the HST.
For a tree
$k$-HST $\tau$, we use $V_\tau, U_\tau, deg(\tau)$ and $dep(\tau)$ to denote
$\tau$'s vertices, non-leaf vertices, degree and depth, respectively. 
For some $v \in U_\tau$, let $\Lambda(v)$ be the set of $v$'s children,
$\Gamma(v)$ be the set of $v$'s offspring leaves
and $\Delta_v$ be the length of edges from $v$ to its children.
 
\begin{definition}[$k$-sibling-linked hierarchical well-separated tree]
Let $\tau$ be a $k$-HST. We associate each $v \in U_\tau$ a 
metric $d_v : \Lambda_v \times \Lambda_v \to \mathbb{R}$ satisfying 
\begin{eqnarray}
\forall u,w\in \Lambda(v), d_v(u, w) \leq 2\Delta_v
\label{Equation:distance-requirement}
\end{eqnarray}
For some $v\in U_\tau$ and $u, w \in \Lambda(v)$, a \emph{sibling
link} $(u, w)$
is an edge of length $d_v(u,w)$ connecting
$u$ and $w$. We call the graph obtained by combining $\tau$ and all the sibling
links a \emph{$k$-sibling linked
hierarchical
well-separated tree}($k$-SLHST). 
\end{definition}

For a $k$-SLHST $T$, we use $d_T$ to denote its shortest path metric. 
 The notions such as $dep(\tau)$, $deg(\tau)$, $\Delta_v$, $\Lambda_v$ and
$\Gamma_v$ are naturally extended to $k$-SLHSTs. We say a SLHST $T$ supports
$X$ if $X$ is the leaves of $T$. 

To avoid confusion, for a SLHST $T$ that supports $X$, we always use $p, q$ to
denote points in $X$, as well as the leaves of $T$, and $u, v$ to denote inner
vertices in a SLHST, if possible. 

\subsection{Embedding $X$ into a distribution of SLHSTs}

Now, we are going to describe our algorithm for embedding metric $(X, d_X)$ to
a distribution of dominating SLHSTs. The algorithm is almost the same as the
HST embedding algorithm in \cite{FRT03}, except that we use $n^{1/\e}$ as the
scale-decreasing factor, instead of 2. The embedding
is pretty bad(it will incur a distortion of $O(n^\e)$). As we'll show, after we
insert the sibling-links to the HST, the distortion becomes very small. We
partition a set of radius $R$ to clusters of radius $r$ in $\log (R/r)$
steps and in each step we reduce the radius by a factor of 2. 

Despite of its similarity with the tree embedding algorithm of \cite{FRT03}, we
give our algorithm here, for the integrality of the paper. Algorithm
\ref{Algorithm:partition} partitions a metric $Y$ in to sets of smaller radius,
and it will be used in by algorithm \ref{Algorithm:embedding-into-slhst}.

\begin{algorithm}
 \caption{$partition(Y, d_Y, r)$}
\textbf{Input:} A metric $(Y, d_Y)$ and a scale $r$, $r$ is guaranteed to be
at least $R_Y/2$;\\
\textbf{Output:} A partition of $Y$ into clusters of radius smaller than $r$ :
$\set{Y_i : i \in [s]}$
\label{Algorithm:partition}
\begin{algorithmic}[1]
 \STATE Select a $r/2$-net $S \subset Y$;
\STATE Randomly choose a permutation $\pi$ for $S$ and a number $\beta \in
[1,2)$
\FOR {$i = 1, 2, \cdots,|S|$}
\STATE $Y_i \leftarrow Ball_{Y}\left(\pi(i), \beta r/2\right)\backslash
\Union_{j = 1}^{i-1}Y_j$; 
\ENDFOR
\RETURN $\set{Y_i: i \in [|S|], Y_i \neq \emptyset}$
\end{algorithmic}
\end{algorithm}

\begin{claim}
\label{Claim:properties-for-partion}
 The returned $\set{Y_i : i \in s}$ from algorithm
\ref{Algorithm:partition} is indeed a partition, i.e, 
$\forall i\neq j, Y_i \cap Y_j = \emptyset$ and $\bigcup_{i\in [s]}Y_i = Y$.
\end{claim}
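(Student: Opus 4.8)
The plan is to check the two defining conditions of a partition---pairwise disjointness and exact covering of $Y$---directly from the pseudocode of Algorithm~\ref{Algorithm:partition}, using only the covering property of the net and the fact that $\beta \ge 1$.

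For disjointness, I would simply observe that the assignment $Y_i \leftarrow Ball_Y(\pi(i),\beta r/2)\setminus\bigcup_{j=1}^{i-1}Y_j$ removes from $Y_i$ every previously constructed cluster, so $Y_i\cap Y_j=\emptyset$ whenever $j<i$. Since every unordered pair of indices is ordered one way or the other, and the final filtering step only discards empty sets (which cannot create overlaps), the returned family $\set{Y_i : i\in[s]}$ is pairwise disjoint.

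For the covering equality, one inclusion is immediate: each $Y_i\subseteq Ball_Y(\pi(i),\beta r/2)\subseteq Y$, hence $\bigcup_i Y_i\subseteq Y$. For the reverse inclusion, fix $p\in Y$. Since $S$ is an $r/2$-net of $Y$, there is a net point $s\in S$ with $d_Y(p,s)\le r/2$; because $\beta\ge 1$ we have $\beta r/2\ge r/2$, so $p\in Ball_Y(s,\beta r/2)$. Let $i_0$ be the index with $\pi(i_0)=s$. At the moment the loop handles $i_0$, the point $p$ is either already contained in $\bigcup_{j<i_0}Y_j$, or else it survives into $Y_{i_0}=Ball_Y(\pi(i_0),\beta r/2)\setminus\bigcup_{j<i_0}Y_j$. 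In either case $p$ lies in some $Y_i$, so $Y\subseteq\bigcup_i Y_i$, and together with disjointness this is exactly the claimed partition property.

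There is no real obstacle here---the statement follows directly from the construction. The only points that need a little attention are invoking the correct half of the net property (the $r/2$-covering, not the $r/2$-separation) and using that $\beta$ is drawn from $[1,2)$ so that $\beta\ge 1$. The complementary facts, namely $\beta<2$ and the separation of the net, play no role in this claim; they instead enter later, when one bounds the radius of the clusters $Y_i$ (which is at most $\beta r/2 < r$) and the distortion of the resulting SLHST embedding.
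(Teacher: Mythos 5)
Your proof is correct: disjointness is immediate from the set-difference in the construction, and coverage follows from the covering half of the $r/2$-net property together with $\beta\ge 1$, exactly as you argue. The paper states this claim without proof, and your argument is precisely the routine verification the author intended to leave to the reader.
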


\begin{claim}
\label{Claim:s-is-small} 
The number of clusters that algorithm \ref{Algorithm:partition} returns is
at most $(4R_Y/r)^{\alpha_Y} \leq 8^{\alpha_Y}$, for $r\geq R_Y/2$.
\end{claim}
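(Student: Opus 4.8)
The plan is to bound $s$ by the size of the $r/2$-net $S$ chosen in line~1, and then to bound $|S|$ itself by a standard packing-inside-a-covering argument for the doubling metric $(Y,d_Y)$. Since algorithm~\ref{Algorithm:partition} produces exactly one candidate cluster $Y_i$ for each $\pi(i)\in S$ and returns only the non-empty ones, we have $s\le|S|$ outright, so it suffices to prove $|S|\le(4R_Y/r)^{\alpha_Y}$. The second inequality of the claim then needs nothing more than $r\ge R_Y/2$, which gives $4R_Y/r\le 8$.

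For the bound on $|S|$ I would combine three facts. First, any two distinct points of an $r/2$-net lie more than $r/2$ apart. Second, $S\subseteq Y$, and by definition of $R_Y$ there is a point $c$ with $Y\subseteq Ball_Y(c,R_Y)$, so $S\subseteq Ball_Y(c,R_Y)$. Third, iterating the defining property of $\alpha_Y$ a total of $k$ times shows that $Ball_Y(c,R_Y)$ can be covered by $2^{k\alpha_Y}$ balls of radius $R_Y/2^k$. Now take $k$ to be the least integer with $R_Y/2^k\le r/4$; since a ball of radius at most $r/4$ cannot contain two points that are more than $r/2$ apart, each of the $2^{k\alpha_Y}$ covering balls holds at most one point of $S$, whence $|S|\le 2^{k\alpha_Y}=(2^k)^{\alpha_Y}$. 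Since $2^k\le 4R_Y/r$, this gives $|S|\le(4R_Y/r)^{\alpha_Y}$, as required.

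The argument is essentially routine once set up this way; the only spot that needs a little care is the rounding in the choice of $k$. One has $2^k=4R_Y/r$ exactly when $R_Y/r$ is a power of two --- the situation in the way $partition$ is invoked inside the embedding algorithm, where the working radius is halved level by level, so that $R_Y/r\in\{1,2\}$ --- and otherwise $2^k$ can be as large as $8R_Y/r$, which costs only a constant factor in the base and is still absorbed by $r\ge R_Y/2$ (at worst loosening $8^{\alpha_Y}$ to $16^{\alpha_Y}$, which is harmless for every later use of the claim). Beyond this bookkeeping I do not foresee any real obstacle.
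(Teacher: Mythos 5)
Your argument is correct and is the standard net-packing bound; the paper in fact states Claim~\ref{Claim:s-is-small} without any proof, so your packing-inside-a-covering argument is exactly what it implicitly relies on, and the reduction to $s\le|S|$ plus the iterated doubling cover is the right setup. Your rounding caveat is well taken for the intermediate bound $(4R_Y/r)^{\alpha_Y}$, but note that under the stated hypothesis $r\ge R_Y/2$ one has $4R_Y/r\le 8=2^3$, so you may simply take $k=3$ in your covering step and obtain $|S|\le 8^{\alpha_Y}$ exactly, which is all that the later uses of the claim (the degree bound $2^{O(\alpha)}$ in the SLHST construction) actually need.
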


Algorithm \ref{Algorithm:embedding-into-slhst} describes how a SLHST is
constructed. The tree is essentially a laminar tree of the $X$, where each
vertex in the tree is correspondent to a subset of $X$. A vertex $v_Y$ at
level $i$ (the root has level 0) has $\Delta_{v_Y} = r_i$, and is
associated with a metric $d_{v_Y}(v_{Y'}, v_{Y''}) = d_X(c_{Y'}, c_{Y''}) +
4r_{i+1}$, where $c_{Y'}$ and $c_{Y''}$ are the centers chosen for
$Y'$ and $Y''$ respectively. The additive term $4r_{i+1}$ is to
make sure that the SLHST metric is dominating.

\begin{algorithm}[ht]
 \caption{$embedding\_into\_SLHST(X, d_X, \e)$}
 \label{Algorithm:embedding-into-slhst}
 \textbf{Input:} A metric $(X, d_X)$ and $0 < \e \leq 1/3$ such that $1/\e$ is
an integer;\\
 \textbf{Output:} A SLHST $T$;
 \begin{algorithmic}[1]
  \STATE $h \leftarrow \log R_X$;
  \STATE $\mathcal{C}\leftarrow \set{X}, \mathcal{C}' \leftarrow
\emptyset, V_{\tau} = \set{v_X}, E_{\tau} \leftarrow \emptyset$;
 \FOR{$i \leftarrow 1, \cdots, h$}
 \FOR{Every $Y \in \mathcal{C}$}
  \STATE $\mathcal{C}'' \leftarrow partition(Y, d_{X|Y},
2^{h-i})$;
  \FOR{Every $Y' \in \mathcal{C}''$}
     \STATE $V_{\tau}\leftarrow V_{\tau} + \set{Y'},
E_{\tau}\leftarrow E_{\tau} + \set{(v_Y, v_{Y'})}$
  \ENDFOR
  \STATE $\mathcal{C}' \leftarrow \mathcal{C}' + \mathcal{C}''$;
   \ENDFOR
   \STATE $\mathcal{C} \leftarrow \mathcal{C}', \mathcal{C}'\leftarrow
\emptyset$;
   \ENDFOR
  \STATE $a \leftarrow \lceil\e \log n/\alpha_X\rceil$, $b \leftarrow
$ random
integer from $[a]$;
  \STATE $L = \set{i: 0\leq i \leq h, i \equiv b(\mbox{mod } a)} + \set{0, h}$;
  \STATE Shrink $\tau$ at level set $L$: we remove all level-$i$ vertices
for all $i\notin L$, and directly connect level-$j$ vertices to their level-$i$
ancestors using edges of length $2^{h - i}$, for $i < j$ and $i,j$ adjacent
numbers in $L$; let $\tau'$ be the new tree.
  \STATE Let $i<j$ be two adjacent numbers in $L$, for a level-$i$ vertex $v_Y$
and two level-$j$ vertices $v_{Y'}, v_{Y''}$ whose parent in $\tau'$ is
$v_Y$(so, $Y', Y'' \subset Y$), $d_{v_Y} = d_X(c_{Y'}, c_{Y''})$,
where $c_{Y'}$ is the center of $Y'$(i.e, the radius of $Y'$ is the
maximum distance from $c_{Y'}$ to some other point in $Y'$).
   \RETURN the SLHST $T$ defined by $\tau'$ and $d_{v_Y}$s.
 \end{algorithmic}
\end{algorithm}

To avoid confusion, we use ``rank'' instead of ``level'' to denote the positions
of vertices in $\tau, \tau'$ and $T$. We say a vertex $v \in
v_\tau$ has ``rank'' $i$, if the path from $v$ to the root in $\tau$ contains
$i$ edges. The rank of a vertex in $T$(or $\tau'$) is just the rank of its
correspondent vertex in $\tau$.
Notice that if some vertex $v\in U_T=U_{\tau'}$ has rank $i$, then $i \in L$.

\begin{lemma}
The algorithm \ref{Algorithm:embedding-into-slhst} actually returns a SLHST,
i.e, the associated metrics satisfy equation
(\ref{Equation:distance-requirement}).
Furthermore, $deg(T)\leq n^{O(\e)}$.
\end{lemma}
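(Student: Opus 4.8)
The plan is to verify the two claims separately: first that each associated metric $d_{v_Y}$ satisfies the distance requirement~(\ref{Equation:distance-requirement}), and then that $\deg(T) \leq n^{O(\e)}$. For the first claim, fix adjacent ranks $i < j$ in $L$, a rank-$i$ vertex $v_Y$ in $\tau'$, and two rank-$j$ children $v_{Y'}, v_{Y''}$ with $Y', Y'' \subset Y$. In $\tau'$, the edge length from $v_Y$ to each of its children is $\Delta_{v_Y} = 2^{h-i}$. What I need is $d_{v_Y}(v_{Y'}, v_{Y''}) = d_X(c_{Y'}, c_{Y''}) \leq 2 \cdot 2^{h-i}$. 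The key observation is that $Y', Y''$ are clusters produced (through a chain of refinements) inside $Y$, and $Y$ itself was produced by $partition(\cdot,\cdot, 2^{h-i})$ in the uncontracted tree $\tau$; by the guarantee of Algorithm~\ref{Algorithm:partition} every cluster of that call has radius smaller than $2^{h-i}$, and all subsequent refinements only shrink clusters. Hence both $c_{Y'}$ and $c_{Y''}$ lie in $Y$ (in fact in a common rank-$i$ cluster of $\tau$ of radius $< 2^{h-i}$), so $d_X(c_{Y'}, c_{Y''}) \leq 2 \cdot 2^{h-i} = 2\Delta_{v_Y}$ by the triangle inequality through the cluster center. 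A minor point to be careful about: one must check that the two adjacent elements of $L$ coming right before $0$ and right after $h$ (the forced endpoints) are handled the same way, and that the edge length assigned in the contraction step, $2^{h-i}$, is exactly $\Delta_{v_Y}$ so that~(\ref{Equation:distance-requirement}) reads as stated.

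For the degree bound, I would bound $|\Lambda(v_Y)|$ for a rank-$i$ vertex $v_Y$ in $\tau'$ whose children have rank $j$, where $j - i \leq a = \lceil \e \log n / \alpha_X \rceil$ (this is how $L$ was chosen, except possibly for the last gap ending at $h$, which is at most $a$ as well). The children of $v_Y$ in $\tau'$ are exactly the rank-$j$ descendants of $v_Y$ in the uncontracted tree $\tau$, i.e.\ the clusters obtained by iterating $partition$ a total of $j - i$ times starting from $Y$. By Claim~\ref{Claim:s-is-small}, each single application of $partition$ splits a cluster into at most $8^{\alpha_X}$ pieces, so after $j-i \leq a$ iterations the number of rank-$j$ clusters inside $Y$ is at most $(8^{\alpha_X})^{a} = 8^{\alpha_X a}$. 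Plugging in $a \leq \e \log n / \alpha_X + 1$ gives $8^{\alpha_X a} \leq 8^{\e \log n + \alpha_X} = n^{3\e} \cdot 8^{\alpha_X} = n^{O(\e)}$, since $\alpha_X$ is treated as a constant (bounded doubling dimension). Taking the max over all inner vertices yields $\deg(T) \leq n^{O(\e)}$.

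The step I expect to be the main obstacle is the first one: cleanly arguing that $d_X(c_{Y'}, c_{Y''}) \leq 2\Delta_{v_Y}$. The subtlety is that $Y'$ and $Y''$ are not clusters of a single $partition$ call at scale $2^{h-i}$ — they are clusters several refinement levels below — so I cannot directly invoke Claim~\ref{Claim:s-is-small} or the radius guarantee on them. I need to track that both are contained in the \emph{same} rank-$i$ cluster $Y$, that $Y$ has radius $< 2^{h-i}$ by the output guarantee of the $partition$ call that created it, and that the centers $c_{Y'}, c_{Y''}$ are points of $Y$; then the bound follows by routing through any fixed center of $Y$. The endpoint gaps in $L$ (the segments touching $0$ and $h$) and the fact that $partition$ is only called when $r \geq R_Y/2$ need a sentence of justification each, but these are routine once the main containment argument is in place.
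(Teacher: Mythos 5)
Your proposal is correct and follows essentially the same route as the paper: the distance requirement is obtained by noting that $c_{Y'},c_{Y''}$ lie in the rank-$i$ cluster $Y$, whose radius is below $2^{h-i}=\Delta_{v_Y}$ by the guarantee of Algorithm~\ref{Algorithm:partition}, giving $d_X(c_{Y'},c_{Y''})\leq 2R_Y\leq 2\Delta_{v_Y}$; and the degree bound comes from Claim~\ref{Claim:s-is-small} applied to the $\leq a$ collapsed levels, yielding $2^{O(\alpha)a}=n^{O(\e)}$. The containment step you flagged as a possible obstacle is exactly the paper's (one-line) argument, and your extra checks on the endpoint gaps of $L$ are routine refinements the paper omits.
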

\begin{proof}
Let $v_Y \in U_T$ be a rank $i$ vertex and $v_{Y'}, v_{Y''}$ be two of its
chilren in $T$. Notice that $v_{Y'}, v_{Y''}$ has the same rank $j$, where $i$
and $j$ are adjacent in $L$.  Then $d_{v_Y}(v_{Y'}, v_{Y''}) = d_X(c_{Y'},
c_{Y''})\leq 2R_Y \leq 2^{h-i+1} = 2\Delta_{v_Y}$.

The degree of $\tau$ is $2^{O(\alpha)}$, by claim \ref{Claim:s-is-small}. The
shrinking operation collapses $a$ levels into 1 level, and thus, the degree
becomes at most $2^{O(\alpha)a} = 2^{O(\alpha)\lceil\e\log n/\alpha\rceil} =
n^{O(\e)}$.
\end{proof}

\begin{lemma} \label{Lemma:embedding-into-SLHSTs}
 The random SLHST $T$ that algorithm \ref{Algorithm:embedding-into-slhst}
returns supports $X$ and satisfies: 
\begin{enumerate}
\item $\forall p,q \in X, d_T(p, q) \geq d_X(p,q)$;
\item $\forall p, q\in X, \E_{T}\left[d_T(p,
q)\right] \leq O(\alpha/\e) d_X(p, q)$.
\end{enumerate}
\end{lemma}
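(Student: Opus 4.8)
The plan is to establish the two properties separately, using the structure of Algorithm~\ref{Algorithm:embedding-into-slhst} together with the net/packing bounds from Claims~\ref{Claim:properties-for-partion} and~\ref{Claim:s-is-small}.

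\textbf{Domination (property 1).} First I would argue that for any two leaves $p, q \in X$, the distance $d_T(p,q)$ dominates $d_X(p,q)$. Let $v_Y$ be the last common ancestor of $p$ and $q$ in $\tau'$, say of rank $i$, and let $v_{Y'}, v_{Y''}$ be the two children of $v_Y$ on the respective root-to-leaf paths, of rank $j$ (with $i,j$ adjacent in $L$). The shortest path in $T$ between $p$ and $q$ must pass ``through'' the subtree rooted at $v_Y$; the cheapest way to cross is to go from $p$ up to (or near) $c_{Y'}$, take the sibling link of length $d_{v_Y}(v_{Y'},v_{Y''}) = d_X(c_{Y'},c_{Y''})$, then descend to $q$. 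I would make this precise by induction on the rank: the key point is that the edge from a rank-$i$ vertex to a rank-$j$ child has length $2^{h-i}$, and the associated metric value $d_X(c_{Y'},c_{Y''})$ together with the slack absorbed in the additive terms (and the fact that $Y' \subseteq Y$ has radius at most $2^{h-i}$, so $c_{Y'}$ is within $2^{h-i}$ of every point of $Y'$, including $p$) yields $d_T(p,q) \ge d_X(p,c_{Y'}) $ contributions that telescope. Summing the triangle inequality $d_X(p,q) \le d_X(p,c_{Y'}) + d_X(c_{Y'}, c_{Y''}) + d_X(c_{Y''}, q)$ along the path and checking that every term is charged to an edge or link of $T$ at least as long gives $d_T(p,q) \ge d_X(p,q)$.

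\textbf{Expected stretch (property 2).} This is the crux and the main obstacle. Fix $p,q$ and let $\ell = \lceil \log d_X(p,q) \rceil$ roughly. The contribution to $d_T(p,q)$ comes from the lowest (largest-rank) level set element $i \in L$ at which $p$ and $q$ are still separated into different clusters, and from the sibling link there, whose length is $d_X(c_{Y'},c_{Y''}) \le 2 \cdot 2^{h-i}$. So I need to bound $\E[2^{h-i}]$ where $i$ is the rank of the separating ancestor. Two sources of randomness matter: the choice of the offset $b$ (hence the level set $L$), and the FRT-style random ball partitions at each scale. For the partition randomness, the standard \cite{FRT03} argument shows that the probability $p$ and $q$ are separated at scale $2^{h-t}$ (i.e., fall in different clusters of $\mathtt{partition}(\cdot,\cdot,2^{h-t})$) is $O(\alpha) \cdot d_X(p,q) / 2^{h-t}$, using the doubling dimension to bound the number of relevant net points and a union bound over the ``boundary'' annulus of width $\sim d_X(p,q)$. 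Summing $2^{h-t} \cdot \Pr[\text{separated at } t] = O(\alpha) d_X(p,q)$ over scales $t$ would already give the $O(\alpha)$ bound for a plain HST, but here only the scales in $L$ survive shrinking, and a vertex of rank $i \in L$ ``inherits'' separations that could have happened at any of the $a$ finer scales collapsed into it. When $p,q$ get separated at some scale $t$ with $t$ not in $L$, the separating ancestor in $T$ is the one at rank $i = \max\{i' \in L : i' \le t\}$, whose link length is $\Theta(2^{h-i}) = \Theta(2^{h-t} \cdot 2^{t-i})$, and $2^{t-i}$ can be as large as $n^{\e/\alpha} \cdot (\text{const})$ since consecutive elements of $L$ are $a = \lceil \e \log n/\alpha \rceil$ apart — this is exactly why the raw HST has distortion $n^{\Theta(\e)}$. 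The fix is that the random offset $b \in [a]$ makes $t - i$ roughly uniform in $\{0,1,\dots,a-1\}$, so $\E_b[2^{t-i}] = \frac{1}{a}\sum_{k=0}^{a-1} 2^{k} = \Theta(2^a/a) = \Theta(n^{\e/\alpha} / (\e \log n / \alpha))$... which is still too big. Here I must instead exploit that the sibling link replaces the long tree path: the length charged at rank $i$ is $d_X(c_{Y'},c_{Y''})$, and since $c_{Y'},c_{Y''}$ both lie within $O(2^{h-t})$ of the points that got separated at the finer scale $t$ (radius of a rank-$j$ cluster is $\le 2^{h-j} \le 2^{h-t}$ only if $j \le t$... careful: the children have rank $j \ge t$, so their radius is at most $2^{h-j} \le 2^{h-t-1}$), we get $d_X(c_{Y'},c_{Y''}) \le d_X(p,q) + O(2^{h-j})$. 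Summing this over the at-most-$a$ collapsed scales and over all level sets, the geometric-in-$1/k$ decay of the cluster radii $2^{h-j}$ down the collapsed block, combined with the FRT separation probabilities, should yield total expected contribution $O(\alpha) \cdot d_X(p,q) \cdot (\text{number of surviving levels per block factor})$; dividing by the sampling of $b$ converts the $a$-fold overcount into the claimed $O(\alpha/\e)$. I would organize this as: (i) a per-scale separation-probability lemma (pure FRT + doubling), (ii) a bound expressing the $T$-distance contribution at rank $i$ in terms of $d_X(p,q)$ plus cluster radii, and (iii) a summation over $b$ and over scales where the sibling-link shortcut ensures the overcounted $2^{t-i}$ factor is replaced by the much smaller cluster-radius $2^{h-j}$, which telescopes.

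The step I expect to be genuinely delicate is (iii): correctly accounting for which scale actually controls the sibling-link length when several fine scales are collapsed, and showing the shortcut kills the $n^{\Theta(\e)}$ blow-up down to a $1/\e$ factor. I would double-check the edge case where $p,q$ are separated already at a coarse scale, the interaction with the forced inclusion of $0$ and $h$ in $L$, and the fact that the additive $4r_{i+1}$ terms in the informal description are folded into $d_{v_Y} = d_X(c_{Y'},c_{Y''})$ in the final algorithm — so domination must come entirely from the net radius slack, not from an explicit additive term.
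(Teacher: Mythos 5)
Your high-level plan for property 2 is the right one and matches the paper's: condition on the rank $t$ at which $p,q$ are first separated in the unshrunk tree $\tau$; note that the random offset $b$ makes the surviving separation rank $j$ uniform over $\{t,\dots,t+a-1\}$; and use the fact that the sibling link at rank $j$ costs at most $d_X(p,q)+O(2^{h-j})$ (child-cluster radius, not parent scale), so the excess averages to $O(2^{h-t}/a)$. But the quantitative lemma you hang this on is false, and it is exactly the step that produces the final bound. You assert a per-scale separation probability of $O(\alpha)\,d_X(p,q)/2^{h-t}$ and that summing $2^{h-t}\cdot\Pr[\text{separated at }t]$ over scales gives $O(\alpha)\,d_X(p,q)$. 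It does not: that sum contributes $O(\alpha)\,d_X(p,q)$ \emph{per scale} between $d_X(p,q)$ and $R_X$, hence $O(\alpha\log\Phi)\,d_X(p,q)$ with $\Phi$ the aspect ratio; and your parenthetical that a plain HST would then already have distortion $O(\alpha)$ cannot be right, since even the $n$-cycle (doubling) requires $\Omega(\log n)$ distortion into trees. The paper instead uses the aspect-ratio-free telescoping bound from \cite{FRT03}, $\sum_i P_i 2^{h-i}\le O(\log n)\,d_X(p,q)$, and then the whole computation is
$\E[d_T(p,q)]\le \sum_i P_i\left(d_X(p,q)+\frac{O(1)}{a}2^{h-i}\right)\le \left(1+O(\log n)/a\right)d_X(p,q)=O(\alpha/\e)\,d_X(p,q)$,
where the last step is precisely the choice $a=\lceil\e\log n/\alpha\rceil$. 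Your step (iii) never assembles this arithmetic: ``dividing by the sampling of $b$'' yields only a $1/a$ factor, and starting from your (false) $O(\alpha)d_X(p,q)$ bound no combination of the pieces you name produces $O(\alpha/\e)$.

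On property 1, the paper's intended link length carries an additive term (the text preceding Algorithm \ref{Algorithm:embedding-into-slhst} and the proof of Lemma \ref{Lemma:embedding-into-SLHSTs} both use $d_{v}(v_Y,v_{Y'})=d_X(c_Y,c_{Y'})+2^{h-i+1}$; the pseudocode's line defining $d_{v_Y}$ drops it, evidently a typo), and with it domination is a one-line triangle inequality since each rank-$i$ cluster has radius at most $2^{h-i}$. Your additive-term-free charging argument can be repaired, but not with the radius bound you state: you charge $d_X(p,c_{Y'})$ using ``radius of $Y'$ at most $2^{h-i}$'' (the parent's scale), while the entire tree path from $p$ up to $v_{Y'}$ has length only about $2\cdot 2^{h-j}$ with $j$ the rank of $Y'$, which is far smaller than $2^{h-i}$ after shrinking. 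You need the tighter bound $\mathrm{radius}(Y')\le 2^{h-j}=\Delta_{v_{Y'}}$ (so the first tree edge below $v_{Y'}$ already pays for $d_X(p,c_{Y'})$), together with an argument that any $p$--$q$ path must exit the subtree of $v_{Y'}$ through $v_{Y'}$ itself via either the parent or a chain of sibling links of total length at least $d_X(c_{Y'},c_{Y''})$. As written, the charging does not close.
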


\begin{proof}
Fix two points $p$ and $q$. Let $v_Y, v_{Y'}$ be the two highest-rank vertices
in the shortest path between $p$ and $q$ in $T$, and $v$ be their parent. Let
the rank of $v_Y$ and $v_{Y'}$ be $i$. If $i = h$, clearly $d_T(p,q) =
d_X(p,q)$; if $i < h$, 
\begin{eqnarray*}
& & d_T(p, q)\geq d_v(v_Y, v_{Y'})\\
 &=& d_X(c_Y, c_{Y'}) + 2^{h-i+1} \\
 &\geq& d_X(p, q) - d_X(p, c_Y) - d_X(q, c_{Y'}) + 2^{h-i+1}\\
 &\geq& d_X(p, q) - 2^{h - i} - 2^{h - i} + 2^ {h - i + 1}\\
 &=& d_X(p,q);
\end{eqnarray*}
So, we've proved the first property.

Let $P_i$ be the probability that $p$ and $q$ are separated at rank $i$
in $\tau$. Obviously, $\sum_{i=1}^{h} P_i = 1$, and from \cite{FRT03}, we have
\begin{eqnarray*}
\sum_{i=1}^{h}P_i2^{h-i} &\leq& O(\log n)d_X(p, q)
\end{eqnarray*}

Now, we fix a tree $\tau$. Suppose $p$ and $q$ are separated at
rank $i$ in $\tau$. Let $v_{Y_j}$ and $v_{Y'_j}$ be the rank $j$ ancestors of
$p$ and $q$, respectively. If $i \leq h - a + 1$, $p$ and $q$ may be separated
at rank $i, i+1, \cdots, i+a-1$ in $\tau'$, each with probability $1/a$. The
expected $d_T(p, q)$, over all possible $b$s, denoted by $D_i$, is at most
\begin{eqnarray*}
& &\frac1a\sum_{j = i}^{i+a-1}\left(d_X(c_{Y_j}, c_{Y'_j})+
2\sum_{j' = j}^{h-1}2^{h-j'}\right)\\
&\leq& \frac1a\sum_{j =
i}^{i+a-1}\left(d_X(p,q)+2^{h-j}+2^{h-j}+4\times2^{h-j}\right)\\
&\leq& d_X(p, q) + \frac{12}{a}2^{h-i}
\end{eqnarray*}

If $p$ and $q$ are separated at rank $i$ in $\tau$, for some $i > h - a + 1$,
$D_i$ is at most
\begin{eqnarray*}
 & &\frac1a\left(\sum_{j=i}^{h}d_X(c_{Y_j},c_{Y'_j}) +
(i+a-1-h)d_X(p, q)\right)\\
&\leq&d_X(p, q) +
\frac1a\sum_{j=i}^{h}\left(2^{h-j}+2^{h-j}+2\sum_{j'=j}^{h-1}2^{h-j'}\right)\\
&\leq& d_X(p, q) + \frac{12}{a}2^{h-i}
\end{eqnarray*}

$\E[d_T(p, q)]$ Taking the expectation over $\tau$ and $b$, we get
\begin{eqnarray*}
 \E_{T}\left[d_T(p,q)\right]&=&\sum_{i=1}^{h}P_iD_i \\
 &\leq& \sum_{i=1}^{h}P_i\left(d_X(p, q) + \frac{12
}{a}2^{h-i}\right)\\
&\leq&\sum_{i=1}^{h}P_id_X(p,q)+\frac{12}{a}\sum_{i=1}^{h}P_i2^{h-i}\\
&\leq& d_X(p, q) + \frac{12}{a}O(\log n)d_X(p, q)\\
&\leq& O(\alpha/\e)d_X(p, q)
\end{eqnarray*}

The last inequality comes from the fact that $a = \lceil\e\log n/\alpha\rceil$.
\end{proof}

\subsection{Decomposition of EMD over a SLHST}
Before showing the main lemma of this subsection, we introduce some notions.

Let $T$ be a SLHST that supports $X$. For a distribution $\mu \in \mathcal{P}_X$
and a vertex $v \in U_T$, define $\mu_v = \sum_{p \in
\Gamma(v)}\mu_p$ and $\hat \mu_v = \Directprod_{u \in
\Lambda(v)}\mu_u$. Define extended EMD (or EEMD) between $\hat \mu_v$ and $\hat
\nu_v$ to be
\begin{eqnarray*}
EEMD_v(\hat \mu_v, \hat \nu_v) &=&
\min_{\pi : \Lambda_v \times \Lambda_v \to
\mathbb R} EEMD_v^\pi(\hat\mu_v, \hat\nu_v)\\
EEMD_v^\pi(\hat\mu_v, \hat\nu_v) &=&
\sum_{u, w \in \Lambda_v} \pi(u, w)d_v(u, w)\\ 
&+& \Delta_v\left(\sum_{u \in
\Lambda_v}\mu_u - \sum_{u, w\in \Lambda_v}\pi_{u,w}\right) \\
&+&\Delta_v\left(\sum_{w \in \Lambda_v}\nu_w  - \sum_{u, w\in
\Lambda_v}\pi_{u,w}\right)
\end{eqnarray*}
where $\pi$ ranges over all transportation functions satisfying
\begin{eqnarray*}
& &\forall u, w \in \Lambda_v, \pi(u,w)\geq 0;\\
& &\forall u \in \Lambda_v, \sum_{w \in \Lambda_v}\pi(u, w) \leq \mu_u;\\
& &\forall w \in \Lambda_v, \sum_{u \in \Lambda_v}\pi(u, w) \leq \nu_w.
\end{eqnarray*}

The definition says that all the unmatched units must be sent to $v$.
\begin{lemma}
\label{Lemma:emd-to-sum-emds}
$$EMD_T(\mu, \nu) = \sum_{v\in U_T}EEMD_v(\hat \mu_v, \hat \nu_v)$$
\end{lemma}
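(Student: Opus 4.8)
I would first record two structural facts about the shortest‑path metric $d_T$, and then prove the two inequalities of the lemma separately. For a leaf $p$ and an ancestor $v$ of $p$ in the underlying tree, write $\ell(p,v)$ for the sum of the tree‑edge lengths on the path from $p$ up to $v$. \emph{Fact (i):} $d_T(p,v)=\ell(p,v)$ — a sibling link joins two vertices of equal depth, so it can never help a walk get closer to an ancestor, and a short induction on depth shows the straight upward path is shortest. \emph{Fact (ii):} if $p\neq q$ are leaves with least common ancestor $v$, and $v_Y,v_{Y'}\in\Lambda_v$ are the children of $v$ lying above $p$ and $q$, then $d_T(p,q)=\ell(p,v_Y)+d_v(v_Y,v_{Y'})+\ell(v_{Y'},q)$; that is, the canonical ``climb to $v_Y$, cross one sibling link, descend to $q$'' path is shortest. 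The upper bound is that path; for the lower bound, the only vertex of the subtree rooted at $v_Y$ incident to an edge leaving that subtree is $v_Y$ itself (those edges are the tree edge $(v,v_Y)$ and the sibling links out of $v_Y$), so every $p$–$q$ walk must visit $v_Y$, and symmetrically $v_{Y'}$, and between these visits it spends at least $d_v(v_Y,v_{Y'})$ by the triangle inequality in $(\Lambda_v,d_v)$ together with $d_v\le 2\Delta_v$. I would make this rigorous by induction on the number of edges of the walk.

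\textbf{The inequality $EMD_T(\mu,\nu)\ge\sum_{v\in U_T}EEMD_v(\hat\mu_v,\hat\nu_v)$.} Take an optimal transport $\pi^{\ast}$ for $EMD_T(\mu,\nu)$, and for each internal vertex $v$ define $\pi_v(u,w)=\sum_{p\in\Gamma(u),\,q\in\Gamma(w)}\pi^{\ast}(p,q)$ for $u,w\in\Lambda_v$; one checks directly that $\pi_v$ is feasible in the definition of $EEMD_v$. Then I would expand $\sum_{p,q}d_T(p,q)\pi^{\ast}(p,q)$ using Fact (ii). Grouping the sibling‑link terms by $v$ gives exactly $\sum_v\sum_{u,w\in\Lambda_v}\pi_v(u,w)d_v(u,w)$ (the diagonal $u=w$ contributes $0$). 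For the climbing terms one uses that the tree edge below an internal vertex $v'$, of length $\Delta_{v'}$, is traversed by the canonical $p$–$q$ path precisely when exactly one of $p,q$ lies in $\Gamma(v')$; reorganising the double sum turns the $p$‑side into $\sum_{v'}\Delta_{v'}\bigl(\mu_{v'}-\sum_{u,w\in\Lambda_{v'}}\pi_{v'}(u,w)\bigr)$ and the $q$‑side into the same expression with $\nu$. Collecting terms yields $\sum_{p,q}d_T(p,q)\pi^{\ast}(p,q)=\sum_{v}EEMD_v^{\pi_v}(\hat\mu_v,\hat\nu_v)\ge\sum_{v}EEMD_v(\hat\mu_v,\hat\nu_v)$.

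\textbf{The inequality $EMD_T(\mu,\nu)\le\sum_{v\in U_T}EEMD_v(\hat\mu_v,\hat\nu_v)$.} Here I would build a single global transport out of optimal local ones, processing $U_T$ from the leaves up. For each $v$ I would choose an optimal $\pi_v^{\ast}$ for $EEMD_v$ that also satisfies the consistency condition $\pi_v^{\ast}(u,u)\ge M_u$ for every child $u$, where $M_u$ is the total mass already matched inside the sub‑SLHST $T_u$ by the choices made below $u$; this is no loss, since $M_u\le\min(\mu_u,\nu_u)$ and an exchange argument (pushing mass onto the zero‑cost diagonal entry $(u,u)$, which at worst trades one $d_v$‑edge for two $\Delta_v$‑escapes and so never raises the objective) produces such a $\pi_v^{\ast}$. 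Then I would assemble a transport $\pi$ on $X\times X$: inside $T_u$ keep the matches built below, and at level $u$ route, for each pair of distinct children $w\neq w'$, exactly $\pi_u^{\ast}(w,w')$ units of still‑unmatched $\mu$‑mass in $\Gamma(w)$ to still‑unmatched $\nu$‑mass in $\Gamma(w')$ — consistency guarantees there is enough of it. An induction (telescoping via $\ell(p,u)=\ell(p,w)+\Delta_u$ and Facts (i)–(ii)) shows that the cost of the partial transport inside $T_u$, with its leftover supply and demand charged at the rate $\ell(\cdot,u)$ of escaping to $u$, equals $\sum_{w\in U_{T_u}}EEMD_w^{\pi_w^{\ast}}$. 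At the root the two leftovers have equal total mass and may be matched to each other at zero cost, so $\pi$ is a genuine transport between $\mu$ and $\nu$ of cost $\sum_{v}EEMD_v^{\pi_v^{\ast}}(\hat\mu_v,\hat\nu_v)=\sum_v EEMD_v(\hat\mu_v,\hat\nu_v)$.

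\textbf{Where the difficulty lies, and fallbacks.} I expect the delicate step to be the second inequality, and within it the consistency condition: $EEMD_v$ at an internal vertex is defined using the \emph{entire} subtree masses $\mu_u$ (for $u\in\Lambda_v$), even though part of each $\mu_u$ has already been moved at deeper levels, so one has to argue that an optimal plan at $v$ may ``park'' the already‑matched portion on the zero‑cost diagonal $(u,u)$ and therefore not pay for it twice; the first inequality is essentially bookkeeping once Facts (i)–(ii) are in hand. If running the assembly globally is awkward, I would instead induct on $dep(T)$, reducing to the one‑level identity $EEMD_T(\mu,\nu)=EEMD_\rho(\hat\mu_\rho,\hat\nu_\rho)+\sum_{w\in\Lambda_\rho}EEMD_{T_w}(\mu|_{\Gamma(w)},\nu|_{\Gamma(w)})$, where $EEMD_T$ denotes the obvious transportation cost with the root a free sink (which agrees with $EMD_T$ on balanced inputs) — the same diagonal subtlety recurs, but only one level at a time. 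A third route is Kantorovich duality: decompose a $d_T$‑Lipschitz optimal potential for $EMD_T$ into local optimal potentials for the individual $EEMD_v$'s.
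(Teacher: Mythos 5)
Your proposal is correct, and it proves strictly more than the paper's own argument, which is only a short sketch. The paper reformulates $EMD_T$ as moving supplies and demands upward along edges and gives a single exchange argument --- matching two children of $v$ across their sibling link is never worse than routing both through $v$, because $d_v(u,w)\le 2\Delta_v$ --- and then simply asserts that the optimal strategy decomposes vertex by vertex. Your two-inequality structure makes this rigorous: the projection $\pi_v(u,w)=\sum_{p\in\Gamma(u),\,q\in\Gamma(w)}\pi^{\ast}(p,q)$ together with your Fact (ii) gives the exact identity $\sum_{p,q}d_T(p,q)\pi^{\ast}(p,q)=\sum_v EEMD_v^{\pi_v}$, hence the ``$\ge$'' direction; the bottom-up assembly gives ``$\le$''. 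The consistency condition $\pi_v^{\ast}(u,u)\ge M_u$ is exactly the point the paper's sketch silently skips ($EEMD_v$ is defined on the whole subtree masses $\mu_u,\nu_u$ even though part of that mass is already consumed at deeper levels), and your diagonal-parking argument is the right fix. The same inequality $d_v(u,w)\le 2\Delta_v$, plus the triangle inequality for $d_v$, is the engine in both proofs.

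Two small points to tighten when writing it up. First, in the exchange that saturates the diagonal, the case where both row $u$ and column $u$ of $\pi_v^{\ast}$ are saturated by off-diagonal mass needs the full rerouting $\pi(u,w),\pi(w',u)\mapsto\pi(u,u),\pi(w',w)$, justified by $d_v(w',w)\le d_v(w',u)+d_v(u,w)$; the simpler ``trade one $d_v$-edge for two escapes'' move alone changes $\sum_{u,w}\pi(u,w)$ and hence the escape terms, and by itself could increase the objective. (These reroutings never decrease the diagonal entry of any other child, so they can be applied to all children in turn.) Second, the leftovers at the root are not matched at literally zero cost; matching leftover supply at $p$ to leftover demand at $q$ costs $d_T(p,q)\le\ell(p,\rho)+\ell(\rho,q)$, which is covered by the escape charges already counted in your inductive invariant, so the assembled transport has cost at most $\sum_v EEMD_v$ --- which is all the ``$\le$'' direction requires.
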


\begin{proof}[Proof sketch]
We can view distributions $\mu$ and $\nu$ as supplies and demands on the leaves
of the tree. The allowed operation is moving $\e$ amount of supplies (or
demands) along some edge $(u, v)$, which costs $\e l(u,v)$. We can cancel out
the same amount of supplies and demands on the same vertex for free. We can
further restrict the moving direction, so that the moves can not go
downwards.

We show that the best strategy is to match supplies and demands locally. i.e,
it can not move supplies along $(u,v)$ while moving demands along
$(w,v)$, for some $v\in U_T$ and $u, w\in \Lambda_v$. Otherwise we would moving
some amount of supplies directly from
$u$ to $w$ and then match the same amount of demands at $w$. This will 
make the cost smaller, since $d_v(u,w) < \Delta_v + \Delta_v$. Thus, in the best
matching, each vertex $v$ will receive $\max(\mu_v - \nu_v, 0)$ amount of
supplies and $\max(\nu_v - \mu_v, 0)$ amount of demands and cancels out all but
$\card{\mu_v -\nu_v}$ amount of supplies or demands, which will be sent to its
parent. This is exactly the sum of $EEMD_v$s over all $v \in U_T$. 
\end{proof}

\begin{lemma} \label{Lemma:emd-of-dominating-metrics}
If $X$ can be embedded into a set $\mathcal{Y}$ of dominating metrics, with
distortion $\beta$, then, $EMD_X$ can be embedded into $\set{EMD_Y : Y\in
\mathcal{Y}}$, with distortion $\beta$.
\end{lemma}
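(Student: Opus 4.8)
The plan is to mimic Charikar's argument~\cite{Cha02}. Spell out the hypothesis as: there is a distribution over metrics $Y \in \mathcal{Y}$, each equipped with a map $f_Y$ from $X$ into $Y$, such that $d_Y(f_Y(p), f_Y(q)) \ge d_X(p,q)$ for every $Y$ and all $p, q \in X$, while $\E_Y[d_Y(f_Y(p), f_Y(q))] \le \beta\, d_X(p,q)$. The claimed embedding of $EMD_X$ sends a distribution $\mu \in \mathcal{P}_X$ to (the distribution, over the random $Y$, of) the pushforward $f_Y(\mu)$, regarded as a point of the metric $EMD_Y$. I will show this embedding dominates $EMD_X$ and has expected expansion at most $\beta$.

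First I would check domination: fix $Y$ and let $\pi_Y$ be an optimal transportation plan for $EMD_Y(f_Y(\mu), f_Y(\nu))$. Pulling $\pi_Y$ back along $f_Y$ — distributing the mass assigned to a pair $(a,b)$ arbitrarily among the pairs of $f_Y$-preimages — yields a feasible plan $\pi$ for $(\mu, \nu)$ on $X$, and since $d_X(p,q) \le d_Y(f_Y(p), f_Y(q))$ the $d_X$-cost of $\pi$ is term by term at most the $d_Y$-cost of $\pi_Y$. Hence $EMD_X(\mu, \nu) \le EMD_Y(f_Y(\mu), f_Y(\nu))$ for every $Y$.

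The expected-expansion bound goes the other direction. Let $\pi^\ast$ be an optimal plan for $EMD_X(\mu, \nu)$. For each $Y$, pushing $\pi^\ast$ forward through $f_Y$, i.e. setting $\pi_Y(a, b) = \sum_{f_Y(p) = a,\, f_Y(q) = b} \pi^\ast(p, q)$, gives a feasible plan for $(f_Y(\mu), f_Y(\nu))$, so
\begin{equation*}
EMD_Y(f_Y(\mu), f_Y(\nu)) \;\le\; \sum_{p, q} d_Y(f_Y(p), f_Y(q))\,\pi^\ast(p, q).
\end{equation*}
Taking the expectation over $Y$, using linearity of expectation — legitimate because $\pi^\ast$ is a fixed plan not depending on $Y$ — together with $\E_Y[d_Y(f_Y(p), f_Y(q))] \le \beta\, d_X(p,q)$, I get
\begin{equation*}
\E_Y\bigl[EMD_Y(f_Y(\mu), f_Y(\nu))\bigr] \;\le\; \beta \sum_{p, q} d_X(p,q)\,\pi^\ast(p,q) \;=\; \beta\, EMD_X(\mu, \nu),
\end{equation*}
which together with domination gives distortion $\beta$.

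I expect no serious obstacle; the argument is short. The one point needing care is that $\pi^\ast$, being optimal for $X$, is in general \emph{not} optimal for a given $Y$, which is exactly why the second step only produces an upper bound on $EMD_Y$ — and an upper bound is all that is needed. Minor bookkeeping: $f_Y$ need not be injective, which is handled by the mass-splitting pullback and the pushforward described above, and one should observe that the three feasibility constraints defining $EMD$ are preserved under both operations.
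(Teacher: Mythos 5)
Your proposal is correct and follows essentially the same route as the paper: fix the optimal plan $\pi^\ast$ for $EMD_X(\mu,\nu)$, observe it is feasible for each $Y$ (so it upper-bounds $EMD_Y$), and apply linearity of expectation with $\E_Y[d_Y] \le \beta\, d_X$, the domination direction being immediate since each $d_Y$ dominates $d_X$. The only quibble is in your pullback for the domination step: splitting the mass of $\pi_Y(a,b)$ ``arbitrarily'' among preimage pairs need not respect the marginals (one should split proportionally to $\mu_p\nu_q$), though in the paper's setting the embeddings are the identity on the point set, so this issue does not arise.
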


\begin{proof}
It's easy to see that $EMD_Y$ dominates $EMD_X$. 
Let $\pi : X\times X \to \mathbb{R}$ be the best transportation function for
$EMD_X(\mu,\nu)$, and $\psi : Y \times Y \to $ be the best
transportation function for $EMD_Y(\mu, \nu)$. By the definition of EMD, we
have 
\begin{eqnarray*}
EMD_X(\mu, \nu) &=& \sum_{p, q \in X}\pi(p, q)d_X(p, q)\\
EMD_Y(\mu, \nu) &=& \sum_{p, q \in X}\psi_Y(p, q)d_Y(p, q)
\end{eqnarray*}

Thus,
\begin{eqnarray*}
& &\E_{Y \in \mathcal{Y}}EMD_Y(\mu, \nu)\\
&=&\E_{Y \in \mathcal{Y}}\sum_{p, q
\in X}\psi_Y(p, q)d_Y(p, q)\\
&\leq&\E_{Y \in \mathcal{Y}}\sum_{p, q
\in X}\pi(p, q)d_Y(p, q)\\
&=&\sum_{p, q\in X}\pi(p,q)\E_{Y \in \mathcal{Y}}d_Y(p, q)\\
&\leq&\sum_{p, q\in X}\pi(p, q)\beta d_X(p,q)\\
&\leq& \beta EMD_X(\mu,\nu)
\end{eqnarray*}
\end{proof}

\begin{lemma} \label{Lemma:emd-of-dominating-SLHSTs}
If $\mathcal{T}$ be the distribution of
dominating SLHSTs that $O(\alpha_X/\e)$-approximate $X$, then
\begin{eqnarray*}
EMD_X(\mu, \nu) &\leq& \E_{T \sim
\mathcal{T}}\left(\sum_{v\in
U_T}EEMD_{\Lambda_v}(\mu_v, \nu_v)\right)\\ &\leq& O(\alpha_X/\e)
EMD_X(\mu,\nu)
\end{eqnarray*}
\end{lemma}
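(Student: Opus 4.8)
The plan is to obtain Lemma~\ref{Lemma:emd-of-dominating-SLHSTs} by chaining together the two ingredients already established: the decomposition identity of Lemma~\ref{Lemma:emd-to-sum-emds} and the ``EMD inherits the distortion of a dominating embedding'' principle of Lemma~\ref{Lemma:emd-of-dominating-metrics}. First I would note that Lemma~\ref{Lemma:embedding-into-SLHSTs} says exactly that the random SLHST $T \sim \mathcal{T}$ produced by Algorithm~\ref{Algorithm:embedding-into-slhst} is a distribution of metrics that dominate $(X,d_X)$ (property~1) and whose expected pairwise distances blow up by at most $O(\alpha_X/\e)$ (property~2); that is, $\mathcal{T}$ is a set $\mathcal{Y}$ of dominating metrics $O(\alpha_X/\e)$-approximating $X$ in the sense required by Lemma~\ref{Lemma:emd-of-dominating-metrics}. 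Applying that lemma with $\beta = O(\alpha_X/\e)$ immediately gives
\begin{eqnarray*}
EMD_X(\mu,\nu) \;\leq\; \E_{T\sim\mathcal{T}}\,EMD_T(\mu,\nu) \;\leq\; O(\alpha_X/\e)\,EMD_X(\mu,\nu),
\end{eqnarray*}
where the left inequality is the ``$EMD_T$ dominates $EMD_X$'' observation from the proof of Lemma~\ref{Lemma:emd-of-dominating-metrics}.

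Next I would substitute Lemma~\ref{Lemma:emd-to-sum-emds}, which states $EMD_T(\mu,\nu) = \sum_{v\in U_T} EEMD_v(\hat\mu_v,\hat\nu_v)$, into the display above. Since the identity holds for every fixed SLHST $T$ supporting $X$, it holds inside the expectation over $T\sim\mathcal{T}$, and we may replace $\E_{T}\,EMD_T(\mu,\nu)$ by $\E_{T}\bigl(\sum_{v\in U_T} EEMD_v(\hat\mu_v,\hat\nu_v)\bigr)$, yielding precisely the claimed chain of inequalities. The only cosmetic point is to reconcile notation: the statement of Lemma~\ref{Lemma:emd-of-dominating-SLHSTs} writes $EEMD_{\Lambda_v}(\mu_v,\nu_v)$ whereas Lemma~\ref{Lemma:emd-to-sum-emds} writes $EEMD_v(\hat\mu_v,\hat\nu_v)$; these denote the same quantity (the EEMD associated with the inner vertex $v$, equivalently with the metric $d_v$ on the child set $\Lambda_v$, applied to the induced sub-distributions $\hat\mu_v,\hat\nu_v$), so I would just remark on this identification rather than belabor it.

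The one genuine subtlety — and the step I expect to need the most care — is verifying that $EMD_T$ is well-defined and that Lemma~\ref{Lemma:emd-of-dominating-metrics} genuinely applies here, because a SLHST is a metric \emph{on its leaf set} $X$ (via the shortest-path metric $d_T$), and we need $d_T$ to be a legitimate dominating metric on the same point set $X$. Property~1 of Lemma~\ref{Lemma:embedding-into-SLHSTs} gives domination pointwise, and property~2 gives the expected-distortion bound, so the hypotheses of Lemma~\ref{Lemma:emd-of-dominating-metrics} are met with $\mathcal{Y} = \mathcal{T}$ viewed through the leaf-metrics $d_T$. Everything else is bookkeeping: the leftmost inequality needs that $EMD_T \geq EMD_X$ holds surely (hence in expectation), which follows from domination; the rightmost needs $\E_T EMD_T \leq \beta\, EMD_X$, which is exactly the conclusion of Lemma~\ref{Lemma:emd-of-dominating-metrics}. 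So the proof is essentially a two-line composition, modulo making the notational correspondence explicit.
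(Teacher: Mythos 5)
Your proposal is correct and follows exactly the paper's route: the paper's own proof is the one-line observation that the lemma ``immediately follows'' from Lemma~\ref{Lemma:emd-to-sum-emds} and Lemma~\ref{Lemma:emd-of-dominating-metrics}, which is precisely the composition you carry out (with Lemma~\ref{Lemma:embedding-into-SLHSTs} supplying the hypothesis that $\mathcal{T}$ is a dominating $O(\alpha_X/\e)$-approximating family). Your extra care about the notational identification and about $d_T$ being a legitimate metric on the leaf set $X$ is a faithful filling-in of details the paper leaves implicit.
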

\begin{proof} 
This lemma immediately follows lemma \ref{Lemma:emd-to-sum-emds} and lemma
\ref{Lemma:emd-of-dominating-metrics}.
\end{proof}

\section{Importance sampling, proof of theorem
\ref{Theorem:approximation-algorithm-1}}
\label{Section:importance-sampling}

After we decomposed $EMD_X$ into $\sum_{v \in U_T}EEMD_v$, we'll use importance
sampling to choose a element $v \in U_T$, as mentioned in
section \ref{Section:overview}. The probability that $v$ is selected should be
roughly the ratio of $EEMD_v$ and $EMD_T$.  The following lemma from
\cite{Ind07} tells us how many samples are needed.

\begin{lemma}[\cite{Ind07}] \label{Lemma:importance-sampling}
Let $Z_1,\cdots,Z_s \geq 0, Z = \sum_{i}Z_i$ and $q_i = Z_i / Z$.
Let $p_1, p_2, \cdots, p_s$ be some numbers satisfying $p_i \geq q_i / \gamma$
and $\sum_{i=1}^s p_i = 1$, for some $\gamma \geq 1$. Consider a random
variable $S$ such that $\Pr[S = Z_i/p_i] = p_i$; note that $E[S] = Z$. Then,
$$\Pr\left[\card{\E_{i \in [M]}S_i - Z} > 0.5Z\right] \leq
e^{-\Omega(M/\gamma)},$$
where $S_1, S_2, \cdots, S_M$ are independent copies of $S$.
\end{lemma}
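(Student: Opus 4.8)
The plan is to prove this concentration bound by a straightforward application of a Chernoff-type argument, treating the quantity $\E_{i\in[M]}S_i = \frac1M\sum_{j=1}^M S_j$ as an average of bounded-variance nonnegative random variables. First I would record the two elementary facts about a single copy $S$: namely $\E[S] = \sum_i p_i \cdot (Z_i/p_i) = \sum_i Z_i = Z$ (as already noted in the statement), and a bound on the second moment, $\E[S^2] = \sum_i p_i (Z_i/p_i)^2 = \sum_i Z_i^2/p_i \le \sum_i Z_i^2 \cdot \gamma/q_i = \gamma \sum_i Z_i^2 \cdot Z/Z_i = \gamma Z \sum_i Z_i = \gamma Z^2$, using $p_i \ge q_i/\gamma = Z_i/(\gamma Z)$. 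So $\Var[S] \le \E[S^2] \le \gamma Z^2$.

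From here the cleanest route is a Bernstein-style inequality, since the summands are nonnegative but not bounded above in a useful way — the single value $Z_i/p_i$ can be as large as $\gamma Z$ (when $p_i$ hits its lower bound $q_i/\gamma$), which is fine because $\gamma Z$ is exactly the right scale. I would apply Bernstein's inequality to the independent copies $S_1,\dots,S_M$, each with mean $Z$, variance at most $\gamma Z^2$, and range bounded by $O(\gamma Z)$: the probability that $\frac1M\sum_j S_j$ deviates from $Z$ by more than $0.5 Z$ is at most $\exp\left(-\Omega\left(\frac{M (0.5Z)^2}{\gamma Z^2 + \gamma Z \cdot 0.5 Z}\right)\right) = \exp(-\Omega(M/\gamma))$, which is exactly the claimed bound. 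Alternatively one can use the MGF directly: bound $\E[e^{tS}]$ for a suitable $t = \Theta(1/(\gamma Z))$, raise to the $M$-th power, and optimize; this avoids citing Bernstein but amounts to the same computation.

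The main obstacle — really the only subtle point — is handling the lack of an a priori upper bound on $S$. Worst case $Z_i/p_i$ is of order $\gamma Z$, and one must make sure that when computing the moment generating function one only ever evaluates $e^{tS}$ with $tS = O(1)$, so that the inequality $e^x \le 1 + x + x^2$ (or $e^x \le 1 + x + O(x^2)$ on a bounded range) can be applied termwise across the support. Concretely, choosing $t$ of order $1/(\gamma Z)$ guarantees $t \cdot Z_i/p_i \le t\gamma Z = O(1)$ for every $i$, so the second-moment bound $\E[S^2]\le\gamma Z^2$ feeds through cleanly and the exponent becomes $-\Omega(M/\gamma)$ after optimizing $t$. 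Everything else is routine; I would not belabor the constants, since the statement only asks for the $\Omega(M/\gamma)$ rate.
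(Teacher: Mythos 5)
Your proposal is correct and follows essentially the same route as the paper: the paper's entire proof is the observation that $0 \leq S \leq \max_i Z_i/p_i \leq \gamma Z$ almost surely, followed by an appeal to the (multiplicative) Chernoff bound, which is exactly the boundedness-plus-second-moment argument you spell out via Bernstein's inequality. Your version is just a more explicit unpacking of the same calculation, correctly identifying that the a.s.\ bound $\gamma Z$ on the range (equivalently, the bound $\E[S^2]\leq\gamma Z^2$) is what yields the $e^{-\Omega(M/\gamma)}$ rate rather than the weaker $e^{-\Omega(M/\gamma^2)}$ that plain Hoeffding would give.
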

\begin{proof} 
With probability 1,
$$0 \leq S \leq \max_{i\in [s]}\frac{Z_i}{p_i} = \max_{i \in [s]}\frac{Z
q_i}{p_i}\leq \gamma Z$$
By Chernoff bound, we have $$\Pr\left[\card{\E_{i \in [M]}S_i - Z} > 0.5Z\right]
\leq
e^{-\Omega(M/\gamma)}.$$
\end{proof}

To apply lemma \ref{Lemma:importance-sampling}, we need a way to approximate
each $EEMD_v$ within a reasonable ratio.  Embedding the metric $d_v$ into
distribution of trees can
give an estimation that is always at least $EEMD_v$, and at most a
logarithmic factor times $EEMD_v$ in expectation(see \cite{Cha02} and
\cite{FRT03}). However,
to allow a good estimation, the number of trees in the support should be
$\Omega(\card{\Lambda_v})$. Then, we need roughly $\sum_{v \in
U_T}\card{\Lambda_v}^2 = O(n\max_{v}\card{\Lambda_v})$ running time for the
importance sampling, as opposed to $\tilde O(n)$ stated in the theorem. We can
do slightly better by using the min of the EMDs, instead of the average.

\begin{lemma}
\label{Lemma:embedding-into-min-EMDs}
Let $(Y, d_Y)$ be a metric space such that $\card{Y} \leq n$, and $\mathcal{T}$
be a distribution of dominating trees that approximate $(Y, d_Y)$ up to
$O(\log n)$ factor, i.e, $\E_{\tau \in T}d_\tau(p, q)\leq O(\log n)d_Y(p,
q)$. If we randomly choose $s = O(\log n)$ trees $\tau_1, \tau_2, \cdots,
\tau_s$ from $\mathcal{T}$, then for every pair of distributions
$\mu, \nu$ over $Y$, 
$$\min_{i \in [s]}EMD_{\tau_i}(\mu, \nu) \leq O(\log n) EMD_Y(\mu, \nu)$$ with
probability at least $1 - 1/n$.
\end{lemma}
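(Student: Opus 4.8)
The plan is to fix the pair $(\mu,\nu)$ and bound the failure probability over the random choice of $\tau_1,\dots,\tau_s$ alone; since the statement quantifies the pair outside the probability, no union bound over the (exponentially many) distributions in $\mathcal{P}_Y$ is needed. The crucial trick is to decouple the EMD into point-to-point distances by reusing a single transportation plan in every tree. First I would let $\pi:Y\times Y\to\R$ be an optimal transportation plan realizing $EMD_Y(\mu,\nu)$. Feasibility of a transportation plan depends only on the marginals $\mu,\nu$, not on the underlying metric, and each $\tau_i$ has leaf set $Y$; hence $\pi$ is also feasible for $\tau_i$, so $EMD_{\tau_i}(\mu,\nu)\le\sum_{p,q\in Y}\pi(p,q)\,d_{\tau_i}(p,q)$ for every $i$. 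Taking the minimum over $i$ on both sides, it suffices to bound $\min_{i\in[s]}\sum_{p,q}\pi(p,q)d_{\tau_i}(p,q)$.

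Next, by linearity of expectation and the assumed FRT-type guarantee $\E_{\tau\sim\mathcal{T}}[d_\tau(p,q)]\le O(\log n)d_Y(p,q)$, for each single tree $\tau_i$ we have
$$\E_{\tau_i}\left[\sum_{p,q}\pi(p,q)d_{\tau_i}(p,q)\right]\le O(\log n)\sum_{p,q}\pi(p,q)d_Y(p,q)=O(\log n)\,EMD_Y(\mu,\nu).$$
Markov's inequality then gives $\Pr_{\tau_i}\!\left[\sum_{p,q}\pi(p,q)d_{\tau_i}(p,q)>2\cdot O(\log n)\,EMD_Y(\mu,\nu)\right]\le 1/2$. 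Since $\tau_1,\dots,\tau_s$ are drawn independently, the probability that this bad event occurs for all $i$ simultaneously is at most $2^{-s}$, and therefore $\min_{i\in[s]}EMD_{\tau_i}(\mu,\nu)\le O(\log n)\,EMD_Y(\mu,\nu)$ fails with probability at most $2^{-s}$. Choosing $s=O(\log n)$ large enough makes $2^{-s}\le 1/n$, completing the proof.

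I expect no real obstacle here: the argument is routine once the decoupling is in place. The one point to get right is that the same optimal plan $\pi$ may be reused in every $\tau_i$, so that Markov's inequality is applied to a sum of scaled tree-distances (to which linearity of expectation applies cleanly) rather than to the EMD value directly, for which we have no per-tree tail bound. It is also worth noting that "$O(\log n)$" is flexible enough to absorb both the factor-$2$ Markov slack and whatever constant in front of $\log n$ is needed to push $2^{-s}$ below $1/n$ (indeed below $1/n^c$ for any fixed $c$, which is what the later union bound over the $O(n)$ vertices of $U_T$ will want).
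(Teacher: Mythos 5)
Your proof is correct and follows essentially the same route as the paper: a per-tree expectation bound giving $EMD_{\tau}(\mu,\nu)\le O(\log n)\,EMD_Y(\mu,\nu)$ with probability at least $1/2$ via Markov, then independence of the $s=O(\log n)$ draws to push the failure probability below $1/n$. The only difference is that you re-derive the expectation bound inline by fixing one optimal plan $\pi$ and using linearity of expectation, which the paper instead delegates to its Lemma \ref{Lemma:emd-of-dominating-metrics}; this is the same decoupling argument, just spelled out.
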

\begin{proof}
For 1 tree $\tau$, with probability $1/2$, $EMD_{\tau}(\mu, \nu) \leq
O(\log n)EMD_Y(\mu, \nu)$. So, with probability at least $1 - 1/n$, there
exists a $i \in [s]$ satisfying $EMD_{\tau_i}(\mu, \nu) \leq O(\log
n)EMD_Y(\mu, \nu)$.
\end{proof}

Now we can proceed to prove theorem \ref{Theorem:approximation-algorithm-1}.
\begin{proof}[Proof sketch of theorem \ref{Theorem:approximation-algorithm-1}]
In the preprocessing, we run algorithm
\ref{Algorithm:embedding-into-slhst} to construct a SLHST $T$, with $\e$ to be
decided shortly. For each metric $(\Lambda_v, d_v)$ associated with
an inner vertex $v \in U_T$, we choose $s =
O(\log n)$ dominating trees $\tau_{v,1},
\tau_{v,2}, \cdots, \tau_{v, s}$ from the distribution of dominating trees that
$O(\log n)$-approximates $d_v$. The data structure includes $T$, all
$\tau_{v,i}$s and all sub-metrics of $X$ restricted to $\Lambda_v$ for $v\in
U_T$.

 The time to sample a SLHST is $\tilde O(n^2)$ and the time to compute all
$\tau_{v, i}$ is still $\tilde O(n^2)$, implying the preprocessing time is
$\tilde O(n^2)$. 

The size of the data structure is $\tilde
O(\sum_{v \in U_T}\card{\Lambda_v}^2) = n^{1 + O(\e)}$. The $\card{\Lambda_v}^2$
space is for small metrics $d_v$s and
sub-metrics of $d_X$ restricted to $\Lambda_v$s.

Let $\mu, \nu \in \mathcal{P}_X$ be the two distributions in the query. 

 With probability at least 0.9, the following two things happen:
\begin{enumerate}
\item  $EMD_T = \sum_{v \in U_T}EEMD_v$ approximates
$EMD_X$ with approximation ratio $O(\alpha/\e)$;
 \item For every $v \in U_T$, $\min_{i \in [s]}EEMD_{\tau_{i,v}}(\hat \mu_v,
\hat \nu_v) \leq O(\log n)EEMD_v(\hat \mu_v, \hat \nu_v)$;
\end{enumerate}
by lemma \ref{Lemma:emd-of-dominating-SLHSTs} and lemma
\ref{Lemma:embedding-into-min-EMDs}.

Assuming the above two things happen, we can do the importance sampling,
using $\min_{i
\in [s]}EEMD_{\tau_{i,v}}(\hat \mu_v, \hat \nu_v)$ as the weight for
$v$. The probability that $v$ is selected
in the importance sampling is at least $\displaystyle \frac{EEMD_v(\hat \mu_v,
\hat \nu_v)}{O(\log n)EMD_T}$. So, by lemma \ref{Lemma:importance-sampling}, we
only need $O(\log n)$ samples to approximate $EMD_T(\mu, \nu)$ up to a constant
factor. This total time for importance sampling is $\tilde O(n)$, since
$\sum_{v \in U_T}\card{\Lambda_v} = \tilde O(n)$.

Let $v$ be a vertex selected by the importance sampling. We compute
$EEMD_v(\hat \mu_v, \hat \nu_v)$ by using standard Hungarian algorithm.
$\card{\Lambda_v}$ is at most $n^{O(\e)}$, for small enough $\e$, computing
$EEMD_v$ takes $o(n)$ time.

In all, with preprocessing time $O(n^2)$, data structure of size $\tilde
O(n^{1 + \e})$, we can approximate $EMD_X(\mu, \nu)$ up to a $O(\alpha_X/\e)$
factor in $\tilde O(n)$ time.  Notice that we can remove the $O(.)$ notion in
the exponent, by losing a constant factor in the approximation ratio.
\end{proof}

\section{Binary importance sampling, proofs of theorem
\ref{Theorem:encoding-scheme}, \ref{Theorem:communication-protocol} and
\ref{Theorem:distance-oracle}}
\label{Section:binary-importance-sampling}

In this section, we describe an encoding scheme, where each distribution $\mu$
is encoded into a linear code $F(\mu)$ and $EMD_X(\mu, \nu)$ can be
approximated using sub-linear time, when $F(\mu)$ and $F(\nu)$ are given. 

Recall that, in the algorithm for theorem
\ref{Theorem:approximation-algorithm-1}, we need to do the importance sampling,
where we estimate $EEMD_v$ for every $v \in T$. To design a sub-linear
estimation time algorithm, we must avoid estimating $EEMD_v$ for every $v$.

This can be done by using a binary sampling method where, we maintain a set
$S$ of
possible outputs, initially $U_T$ and during each iteration, set $S$ is divided
into 2 equal subsets, and replaced with one of the subset with probability
proportional to the weight of the subset. The process repeats $O(\log n)$ times,
until
there's only 1 element left in $S$, which is the output of the importance
sampling.

The above method requires a good estimation for the total weight of a subset. In
section \ref{Section:importance-sampling}, we used the min of $l_1$ norms as the
estimation for the weight of an element.  Then the total weight of a subset is
the sum of mins, which seems hard to estimate. We'll use a different estimation
: the sum of
$l_1$ norms, which is still a $l_1$ norm. The total weight of a subset is again
the sum of $l_1$ norms, equivalent to a $l_1$ norm. There are good sketching
schemes for $l_1$ metric, for example, \cite{KOR98} used a random linear map
$g:l_1 \to \mathbb{R}^k$ as the sketching of a $l_1$ vector, such that the $L_1$
norm of $x$ is approximated by the median of $\card{g_1(x)}, \card{g_2(x)},
\cdots, \card{g_k(x)}$, where $g_i$ is the $i$-th coordinate of $g$. In the
sketch function, each $g_i$ is a linear function of $x$, where the linear
coefficients are chosen from the Cauchy distribution of scale parameter 1. i.e
$g_i = \sum_{j}\alpha_j x_j, \alpha_j \sim \mathbb{C}(0, 1)$.
The value
$k$ determines how well the norm is approximated, as in lemma
\ref{Lemma:sketch-size}. Before giving the lemma, we first define:

\begin{definition}[$\rho$-good] 
For some $0 < \rho < 0.1$, we say the sketch function $g$ is \emph{$\rho$-good}
for a fixed $x$, if 
\begin{eqnarray*}
(1 - \rho)|x|_1 &\leq& median(\card{g_1(x)},
\card{g_2(x)},
\cdots, \card{g_k(x)})\\
&\leq& (1 + \rho)|x|_1
\end{eqnarray*}
where $g_1, g_2, \cdots, g_k$ are $k$ coordinates of $g$.
\end{definition}

\begin{lemma}
\label{Lemma:sketch-size}
Let $g:l_1 \to \mathbb{R}^k$ be the $l_1$ sketch function, with $k = c/\rho^2$
for some $0 < \rho < 0.1$ and integer $c$. Then, for a fixed $x\in l_1$,
 $g$ is $\rho$-good with probability at least $1 - e^{-\Omega(c)}$.
\end{lemma}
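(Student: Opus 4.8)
\textbf{Proof plan for Lemma~\ref{Lemma:sketch-size}.}
The plan is to reduce the statement to the standard analysis of the $p$-stable
(here $1$-stable, i.e.\ Cauchy) sketch and then boost the success probability
by a Chernoff argument over the $k$ coordinates. First I would fix $x \in l_1$
with $x \neq 0$ (the case $x = 0$ is trivial since then every $g_i(x) = 0$).
By the $1$-stability of the Cauchy distribution quoted in the preliminaries,
each coordinate $g_i(x) = \sum_j \alpha_j x_j$ with $\alpha_j \sim \mathbb{C}(0,1)$
independent is distributed as $\mathbb{C}(0, |x|_1)$, hence $|g_i(x)| \stackrel{d}{=}
|x|_1 \cdot |Z|$ where $Z \sim \mathbb{C}(0,1)$. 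Using the cdf of
$\mathbb{C}(0,1)$ from the preliminaries, $\Pr[|Z| \le 1] = \frac{2}{\pi}\arctan(1)
= \frac12$, so the median of the distribution of $|g_i(x)|$ is exactly $|x|_1$.

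Next I would make ``the empirical median is close to the true median'' quantitative.
Let $t^- = (1-\rho)|x|_1$ and $t^+ = (1+\rho)|x|_1$. Since the cdf $F_{|Z|}(t)
= \frac{2}{\pi}\arctan(t)$ has derivative $\frac{2}{\pi(1+t^2)}$, which is
bounded below by a positive constant on a neighbourhood of $t = 1$, there is an
absolute constant $c_0 > 0$ with $\Pr[|g_i(x)| < t^-] \le \frac12 - c_0\rho$ and
$\Pr[|g_i(x)| > t^+] \le \frac12 - c_0\rho$ (for $\rho < 0.1$, so we stay in the
good neighbourhood). The empirical median of $|g_1(x)|,\dots,|g_k(x)|$ fails to
lie in $[t^-, t^+]$ only if either at least $k/2$ of the $|g_i(x)|$ fall below
$t^-$, or at least $k/2$ fall above $t^+$. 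Each of these is the event that a
$\mathrm{Binomial}(k, \tfrac12 - c_0\rho)$ variable exceeds its mean by
$\ge c_0 \rho k$; by the Chernoff bound this has probability $e^{-\Omega(\rho^2 k)}$.
A union bound over the two events and substituting $k = c/\rho^2$ gives failure
probability $e^{-\Omega(c)}$, which is exactly the claim.

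The only mildly delicate point — and the one I would write out carefully — is
pinning down the constant $c_0$ in the anti-concentration bound
$\Pr[|g_i(x)| < (1-\rho)|x|_1] \le \frac12 - c_0\rho$, i.e.\ that the cdf of
$|Z|$ moves away from $\frac12$ at a linear rate near $t=1$; this is immediate
from the explicit formula $F_{|Z|}(t) = \frac{2}{\pi}\arctan t$ and the mean
value theorem, using that $\frac{2}{\pi(1+t^2)} \ge \frac{1}{\pi}$ for
$t \in [0.9, 1.1]$. Everything else is the textbook median-of-samples argument,
and no step depends on the dimension of $x$, so the lemma holds uniformly for
all $x \in l_1$.
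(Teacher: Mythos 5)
Your proof is correct and follows essentially the same route as the paper's appendix proof: $1$-stability reduces each $|g_i(x)|$ to a scaled half-Cauchy with median $|x|_1$, an anti-concentration estimate for the quantiles from the explicit $\arctan$ cdf, and a Chernoff bound on the number of coordinates falling outside $[(1-\rho)|x|_1,(1+\rho)|x|_1]$. If anything, yours is the cleaner write-up: by bounding the two tail events (at least $k/2$ samples below $t^-$, at least $k/2$ above $t^+$) separately, you avoid the paper's single combined indicator $Y_i$ (``outside the interval on either side''), whose expectation is $1-2\delta$ rather than something below $1/2$, so the bound $\Pr[\sum_i Y_i \ge k/2]\le e^{-\delta^2 k/8}$ does not hold as literally stated and should be read as the two one-sided bounds you give.
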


We leave the proof of lemma \ref{Lemma:sketch-size} to the appendix. For a set
of $N$ elements, we fix a binary sampling tree, where each node is a subset
of $[N]$. A set
is equal to the union of its two child sets of half the size. The root of the
tree is $[N]$, and the $N$ leaves are $\set{1}, \set{2}, \cdots,
\set{N}$.  Let $\mathcal{S}_N$ be the family of all subsets of $[N]$ in this
binary sampling tree; notice that $\card{\mathcal{S}_N} = O(N)$, and
$\sum_{S\in \mathcal{S}_N}|S| = \tilde O(N)$. Our encoding
$f:(l_1)^N \to \mathbb{R}^{O(Nk)}$ is defined as 
\begin{equation}
\label{Equation:f}
f(x_1, x_2, \cdots, x_N) =
\Directprod_{S \in \mathcal{S}_N}g\left(\Directprod_{i \in S}x_i\right). 
\end{equation}

Algorithm \ref{Algorithm:binary-importance-sample} is the binary importance
sampling algorithm. It takes the encoding $f$ for $N$ vectors in $l_1$,  and
selects a number $i \in [N]$, with probability supposed to be
$\card{x_i}/\sum_{j\in [N]}\card{x_j}$.

\begin{algorithm}
 \caption{$b\_import\_sample(f(x_1, x_2, \cdots, x_N))$}
 \label{Algorithm:binary-importance-sample}
 \textbf{Input:} Encoding $f$ for $N$ vectors $x_1, x_2, \cdots, x_N$;\\
 \textbf{Output:} A pair $(t, P_t)$, where $t \in [N]$ and $P_t$ is the
probability that the algorithm selects $t$;
 \begin{algorithmic}[1]
 \STATE $S\leftarrow [N], P \leftarrow 1$;
 \WHILE {$\card{S} > 1$}
   \STATE Let $S_1$ and $S_2$ be the two children of $S$ in the binary sampling
tree;
    \STATE Extract $\hat g^1 = g\left(\Directprod_{j \in S_1}x_j\right)$ and
$\hat g^2 = g\left(\Directprod_{j \in S_2}x_j\right)$ from the encoding $f$;
    \STATE $W_1 \leftarrow median\left(\card{\hat g^1_1}, \card{\hat g^1_2},
\cdots, \card{\hat g^1_k}\right)$ and $W_2 \leftarrow median\left(\card{\hat
g^2_1}, \card{\hat g^2_2}, \cdots, \card{\hat g^2_k}\right)$;
   \STATE Let $i$ be 1 with probability $\frac{W_1}{W_1 + W_2}$ and 2 with
probability $\frac{W_2}{W_1 + W_2}$;
   \STATE $S \leftarrow S_i, P \leftarrow P \times \frac{W_i}{W_1 + W_2}$;
 \ENDWHILE
 \RETURN $(t, P)$ where $t$ is the unique element in $S$;
 \end{algorithmic}
\end{algorithm}

\begin{claim}
$P_t$ is the probability that algorithm \ref{Algorithm:binary-importance-sample}
selects $t$, as stated.
\end{claim}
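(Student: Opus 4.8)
The plan is to unwind the \textbf{while} loop of algorithm \ref{Algorithm:binary-importance-sample} and observe that $P$ accumulates exactly the conditional probabilities taken at each level of the binary sampling tree, so that by the chain rule the final value of $P$ equals the probability of reaching the leaf $\set{t}$. First I would set up notation: let $\set{t} = S^{(0)} \subset S^{(1)} \subset \cdots \subset S^{(m)} = [N]$ be the unique chain of nodes in the binary sampling tree from the leaf $\set{t}$ up to the root, where $m = \log N$ (the depth). The loop visits these sets in the order $S^{(m)}, S^{(m-1)}, \dots, S^{(1)}, S^{(0)}$: it starts with $S = S^{(m)} = [N]$, and each iteration replaces the current set $S^{(\ell)}$ by one of its two children; the loop terminates when $\card S = 1$, i.e.\ when $S = S^{(0)} = \set{t}$.

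Next I would argue by induction on the number of completed iterations that, conditioned on the current value of $S$ being $S^{(\ell)}$, the probability that the algorithm reaches this state with the recorded value $P$ is exactly that probability. The key observation is that the quantities $W_1, W_2$ computed in steps 5 are deterministic functions of the fixed input vectors $x_1,\dots,x_N$ (the randomness of the sketch $g$ having already been fixed when $f$ was formed) and of which child-pair is being examined, which in turn is determined by the current $S$. Hence, conditioned on $S = S^{(\ell)}$, step 6 chooses the child $S^{(\ell-1)}$ with a fixed probability $W_i/(W_1+W_2)$, and step 7 multiplies $P$ by exactly this same number. Therefore the event ``the algorithm is now in state $S = S^{(\ell-1)}$'' has probability equal to the product of these conditional factors over the iterations so far, and this product is precisely the running value of $P$ at that point — the two quantities are updated by the identical multiplicative rule, starting from $1$.

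Finally, applying this claim at the last iteration ($\ell = 1 \to \ell = 0$), when the loop exits we have $S = \set{t}$ and $P$ equal to the product of all the conditional branching probabilities along the chain from the root to the leaf $\set{t}$; by the chain rule for probabilities this product is exactly $\Pr[\text{the algorithm selects } t]$. Since the algorithm returns $(t, P)$ with this value of $P$, we conclude $P = P_t$ as claimed. I do not expect any real obstacle here: the only point that needs a moment's care is making explicit that the $W_i$ depend only on $S$ (not on any further randomness), so that the branching probability in step 6 and the multiplier in step 7 genuinely coincide; once that is noted, the argument is a routine chain-rule / induction computation.
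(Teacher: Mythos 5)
Your argument is correct and is exactly the routine chain-rule/induction computation the paper has in mind; indeed, the paper states this claim without any proof, treating it as immediate. Your one point of care --- that $W_1,W_2$ are deterministic once the sketch randomness in $f$ is fixed, so the multiplier in step 7 coincides with the branching probability in step 6 --- is the right thing to make explicit.
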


\begin{definition}[successful]
\label{Definition:successful}
 We say a encoding $f$ is \emph{successful} for fixed $x_1, x_2, \cdots, x_N$, 
if algorithm \ref{Algorithm:binary-importance-sample} selects $t$ with
probability at least $0.5\card{x_t}_1/\card{\Directprod_{i=1}^{N}x_i}_1$.
\end{definition}

\begin{lemma}
\label{Lemma:success-probability}
Let $x_1, x_2, \cdots, x_N$ be fixed. If all occurrences
of $g$ in the encoding $f$ are $1/(10\log N)$-good, $f$ is successful.
\end{lemma}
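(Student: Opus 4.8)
The plan is to track, along the single root-to-leaf path in the binary sampling tree that the algorithm actually traverses down to the leaf $\{t\}$, how the estimated weights $W_1, W_2$ compare to the true $\ell_1$ norms $\bigl|\bigoplus_{j \in S_i} x_j\bigr|_1$. First I would observe that for any set $S$ in the tree, $\bigl|\bigoplus_{j\in S} x_j\bigr|_1 = \sum_{j \in S}|x_j|_1$, so the ``true weight'' of a set is genuinely additive and the true probability of descending from $S$ into child $S_i$ is exactly $\sum_{j\in S_i}|x_j|_1 \big/ \sum_{j\in S}|x_j|_1$. Under the hypothesis that every occurrence of $g$ in $f$ is $1/(10\log N)$-good, each median $W_i$ satisfies $(1-\rho)\sum_{j\in S_i}|x_j|_1 \le W_i \le (1+\rho)\sum_{j\in S_i}|x_j|_1$ with $\rho = 1/(10\log N)$. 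Hence at each node the algorithm's transition probability $W_i/(W_1+W_2)$ differs from the true transition probability by a multiplicative factor in $\bigl[\frac{1-\rho}{1+\rho}, \frac{1+\rho}{1-\rho}\bigr]$, which is at least $1-2\rho = 1 - 1/(5\log N)$.

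Next I would multiply these per-level bounds over the $\lceil\log N\rceil$ levels of the path to $\{t\}$. Since at each of the $\le \log N$ steps the ratio of the algorithm's transition probability to the true transition probability is at least $1 - 1/(5\log N)$, the overall ratio $P_t \big/ \bigl(|x_t|_1/\sum_{i=1}^N |x_i|_1\bigr)$ is at least $\bigl(1 - \tfrac{1}{5\log N}\bigr)^{\log N} \ge \bigl(1 - \tfrac{1}{5\log N}\bigr)^{\lceil\log N\rceil}$. Using $(1-x)^{1/x} \ge 1/4$ for $0 < x \le 1/2$ (or the cruder $(1-1/m)^m \ge 1/4$ for $m\ge 2$), this product is at least $4^{-1/5} > 1/2$, so $P_t \ge 0.5\,|x_t|_1 / \bigl|\bigoplus_{i=1}^N x_i\bigr|_1$, which is exactly the definition of $f$ being successful.

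The only mild subtlety — and the step I would be most careful about — is the edge case where a node $S$ on the path has $W_1 + W_2 = 0$, i.e. all true sub-weights along the way are zero; but in that case $|x_t|_1 = 0$ as well, so the required inequality $P_t \ge 0.5\,|x_t|_1/\bigl|\bigoplus_i x_i\bigr|_1$ holds trivially (the right side is $0$), and one may also assume $\bigl|\bigoplus_i x_i\bigr|_1 > 0$ without loss of generality since otherwise the statement is vacuous. A second point to state cleanly is that ``every occurrence of $g$ is $1/(10\log N)$-good'' is being applied only to the $O(\log N)$ sets appearing on the traversed path, each of which is one of the $O(N)$ sets in $\mathcal{S}_N$ whose sketch is stored in $f$; so the goodness hypothesis as stated (over all of $\mathcal{S}_N$) certainly suffices. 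With the exponent constant $\rho = 1/(10\log N)$ chosen exactly so that $(1-2\rho)^{\log N} > 1/2$, the bound closes with room to spare.
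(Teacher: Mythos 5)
Your argument is correct and is essentially the paper's own proof: the $1/(10\log N)$-goodness makes each transition probability at least a $\frac{1-\rho}{1+\rho}$ factor of the true one, and raising this to the depth $\log N$ of the sampling tree gives a factor exceeding $1/2$. Your treatment of the zero-weight edge case and the explicit numeric bound $(1-\tfrac{1}{5\log N})^{\log N}\ge 4^{-1/5}>1/2$ are welcome additions the paper leaves implicit.
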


\begin{proof}
If all occurrences of $g$ in $f$ are $1/(10\log N)$-good, the median of absolute
values of the sketch will always give the $l_1$ norm of the pre-image, up to a
factor of $1\pm 1/(10\log N)$. Thus the probability that algorithm
\ref{Algorithm:binary-importance-sample} selects $t$ is at least $$\left(\frac{1
-
\frac{1}{10\log N}}{1 + \frac{1}{10\log N}}\right)^{\log
N}\frac{\card{x_t}_1}{\card{\Directprod_{i=1}^{N}x_i}_1} > 0.5
\frac{\card{x_t}_1}{\card{\Directprod_{i=1}^{N}x_i}_1},$$ where the exponent
$\log N$ comes from the depth of the binary sampling tree.
 By definition
\ref{Definition:successful}, $f$ is successful. 
\end{proof}

Then we apply algorithm \ref{Algorithm:binary-importance-sample} to compute the
EMD
over a SLHST metric. As shown in lemma \ref{Lemma:emd-to-sum-emds}, EMD
over a SLHST $T$ is the sum of $EEMD_v$s over all $v\in U_T$. Each underlying
metric $d_v$ can be embedded into distribution of dominating trees, with
distortion logarithmic in the size of the metric. In order to make the
preprocessing efficient, we need a small set of dominating trees.
$\cite{CCG98}$ gives exactly what we want : 

\begin{theorem}[\cite{CCG98}]
Given a metric $(X, d_X)$ with $|X| = n$, there is a set of $O(n\log n)$
dominating trees and a probability on them, such that the metric $d_X$ is
approximated by the set of dominating tree metrics, with distortion $O(\log
n\log\log n)$. Moreover, there is a polynomial algorithm which gives the set
and the probability.
\end{theorem}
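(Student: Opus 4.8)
The plan is to phrase the statement as the feasibility of an exponentially wide linear program and then bound the support of a solution. After discretizing edge lengths, let $\mathcal{F}$ be the resulting finite family of \emph{dominating trees} on the point set $X$: weighted trees whose leaves are the points of $X$ and whose shortest-path metric $d_\tau$ satisfies $d_\tau(p,q)\ge d_X(p,q)$ for all $p,q\in X$. Introduce a variable $\lambda_\tau\ge 0$ for each $\tau\in\mathcal{F}$ and consider the polytope cut out by $\sum_\tau \lambda_\tau = 1$ together with, for every pair $\{p,q\}$,
\[
\sum_{\tau\in\mathcal{F}}\lambda_\tau\, d_\tau(p,q)\;\le\;\beta\, d_X(p,q).
\]
The first step is to certify that this polytope is nonempty for $\beta=O(\log n\log\log n)$, which is exactly Bartal's probabilistic embedding (\cite{Bar96}): it produces weights $\{\lambda_\tau\}$ with $\E_\tau[d_\tau(p,q)]\le O(\log n\log\log n)\,d_X(p,q)$ \emph{simultaneously} over all pairs. (Using the later bound of \cite{FRT03} would replace $\log n\log\log n$ by $\log n$.)

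Next I would extract a small-support solution and make the extraction algorithmic. Aside from the nonnegativity constraints, the polytope above has only $\binom{n}{2}+1$ constraints, so by the standard bound on basic feasible solutions it has a vertex with at most $\binom{n}{2}+1$ nonzero coordinates; this already yields a convex combination of $O(n^2)$ dominating trees with distortion $\beta$. To find such a vertex in polynomial time, run the ellipsoid method (equivalently, multiplicative weights on the $\binom{n}{2}$ pair constraints) with Bartal's construction as an approximate column-generation / best-response oracle: given nonnegative pair weights $y_{pq}$, that construction returns a \emph{single} dominating tree $\tau$ with $\sum_{p,q} y_{pq}\, d_\tau(p,q)\le O(\log n\log\log n)\sum_{p,q} y_{pq}\, d_X(p,q)$, which is what the oracle must output. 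Only polynomially many distinct trees are ever generated, so a basic feasible solution over them is found by ordinary linear programming, and the probability vector it returns is the desired $\{\lambda_\tau\}$.

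The remaining step --- and the one I expect to be the real obstacle --- is to bring the number of trees from $O(n^2)$ down to $O(n\log n)$ at the cost of only a further polylogarithmic factor in $\beta$. Here the trees cannot be treated as black boxes. I would use that a tree metric on $n$ points is a nonnegative combination of at most $n-1$ cut metrics, and that Bartal's trees are hierarchical, of bounded degree, and at each scale partition $X$ into clusters of bounded diameter, so that the distance vectors $(d_\tau(p,q))_{p,q}$ coming out of the construction lie in a structured, low-complexity family rather than being arbitrary. Grouping the generated trees by scale and by which clusters their partitions separate, and replacing each group by $O(\log n)$ representatives chosen so that for every pair the representatives' stretches dominate the group average up to an $O(\log\log n)$ factor, should collapse the $O(n^2)$-term combination to an $O(n\log n)$-term one; the delicate part is checking that the per-pair guarantee survives this sparsification and that the final count is genuinely $O(n\log n)$ and not $O(n\,\polylog n)$. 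Finally, since Bartal's sampling, the ellipsoid/MWU loop, and the sparsification are all polynomial time, the algorithmic claim follows; any residual randomness can be removed by the method of conditional expectations over the $\binom{n}{2}$ pair events.
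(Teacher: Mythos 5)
First, a remark on what you are being compared against: the paper does not prove this statement at all --- it is imported verbatim from \cite{CCG98} --- so you are in effect reconstructing the proof of \cite{CCG98} itself. Your first two steps do match how that proof actually begins: feasibility of the exponentially wide LP is certified by Bartal's probabilistic embedding, and a polynomial-time solution of support $O(n^2)$ is extracted by using Bartal's construction (derandomized by conditional expectations so that it minimizes a \emph{weighted} average stretch) as the column-generation / best-response oracle, the support bound coming from the $\binom{n}{2}+1$ non-trivial constraints. Two caveats even there: running ellipsoid with an oracle that is only $O(\log n\log\log n)$-approximate rather than exact needs some care (the route actually taken in \cite{CCG98} is the Plotkin--Shmoys--Tardos fractional packing framework, not ellipsoid), and the oracle must be deterministic for the ``polynomial algorithm'' claim to stand.

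The genuine gap is exactly where you flag it: the passage from $O(n^2)$ trees to $O(n\log n)$. Your proposed sparsification --- grouping the generated trees by scale and cluster structure and keeping $O(\log n)$ representatives per group --- is not an argument: nothing you wrote controls the number of groups, and a per-pair guarantee $\sum_\tau\lambda_\tau d_\tau(p,q)\le\beta\,d_X(p,q)$ is not preserved by replacing a group with representatives that dominate the group average unless this can be done simultaneously for all $\binom{n}{2}$ pairs, which is the whole difficulty. In \cite{CCG98} the $O(n\log n)$ bound is not obtained by post-hoc sparsification of a basic feasible solution at all; it falls out of the iteration count of the packing algorithm --- each multiplicative-weights iteration contributes one tree to the support, and the iteration count is (width) $\times$ $O(\log$ of the number of pair constraints$)$ --- after a separate width-reduction step bounding $d_\tau(p,q)/d_X(p,q)$ over the trees the oracle may return. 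Without either that argument or a worked-out version of your sparsification, the $O(n\log n)$ count --- which is precisely why this paper cites \cite{CCG98} rather than \cite{Bar96}, since the size of the maps $h_v$ and hence of the encoding depends on it --- remains unproved.
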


Then, by lemma \ref{Lemma:emd-of-dominating-metrics} (or \cite{Cha02}), the
EEMD over $(v, d_v)$ can be embedded into the average EEMD of $O(|\Lambda_v|\log
|\Lambda_v|)$ trees, which is equivalent to the norm of a
$(|\Lambda_v|^2\log |\Lambda_v|)$ dimensional $L_1$, with distortion $O(\log
|\Lambda_v| \log\log
|\Lambda_v|)$. We use $h_v : \mathbb{R}^{\Lambda(v)} \to \mathbb{R}^{\tilde
O(\card{\Lambda(v)}^2)}$ to
denote the linear function whose $l_1$ norm approximates $EEMD_v$. 

For a SLHST $T$, let's list all the vertices in $U_T$ : $v_1, v_2, \cdots,
v_N$. Define : 
\begin{equation}
\label{Equation:FT}
F_T(\mu) = f\left(h_{v_1}(\hat \mu_{v_1}),\cdots,
h_{v_N}(\hat \mu_{v_N})\right)\oplus \Directprod_{i \in
[N]}\hat\mu_{v_i}\\
\end{equation}

Then, algorithm \ref{Algorithm:approximate-emd} shows how to approximate the
EMD in sub linear time, if the encodings are given.

\begin{algorithm}
\caption{$approx\_EMD(T, F_T(\mu), F_T(\nu))$}
\label{Algorithm:approximate-emd}
\begin{algorithmic}[1]
 \STATE $Sum \leftarrow 0$;
 \STATE Extract $f_\mu = f\left(h_{v_1}(\hat \mu_{v_1}),\cdots,
h_{v_N}(\hat \mu_{v_N})\right)$ from $F_T(\mu)$, and $f_\nu =
f\left(h_{v_1}(\hat \nu_{v_1}),\cdots,
h_{v_N}(\hat \nu_{v_N})\right)$ from $F_T(\nu)$;
 \FOR{ $i \leftarrow 1, 2, \cdots, \lceil \log^2 n \rceil$ }
 \STATE $(t, P) = b\_import\_sample(f_\mu - f_\nu)$;
 \STATE Extract $\hat \mu_{v_t}$ from $F_T(\mu)$  and
$\hat \nu_{v_t}$ from $F_T(\nu)$;
  \STATE $Sum \leftarrow Sum + EEMD_{v_t}(\hat \mu_{v_t},
\hat \nu_{v_t})/P$;
 \ENDFOR
\RETURN $Sum/\lceil \log^2 n\rceil$.
\end{algorithmic}
\end{algorithm}

\begin{lemma}
\label{Lemma:approximate-emd-over-slhst}
If all occurrences of $f$ in $F_T(\mu - \nu)
= F_T(\mu) - F_T(\nu)$ are successful, 
algorithm \ref{Algorithm:approximate-emd} outputs
a number between $0.5 EMD_T(\mu, \nu)$ and $1.5 EMD_T(\mu,
\nu)$ with probability at least 0.9.
\end{lemma}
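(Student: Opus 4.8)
The plan is to view Algorithm \ref{Algorithm:approximate-emd} as an instance of the importance sampling estimator of Lemma \ref{Lemma:importance-sampling}, applied to the decomposition $EMD_T(\mu,\nu) = \sum_{v \in U_T} EEMD_v(\hat\mu_v, \hat\nu_v)$ from Lemma \ref{Lemma:emd-to-sum-emds}. Set $Z_i = EEMD_{v_i}(\hat\mu_{v_i}, \hat\nu_{v_i})$ and $Z = \sum_i Z_i = EMD_T(\mu,\nu)$. The first step is to connect the sampling probability $P$ returned by $b\_import\_sample$ with the ideal probability $q_i = Z_i/Z$. Here I would use the hypothesis that all occurrences of $f$ in $F_T(\mu-\nu) = F_T(\mu) - F_T(\nu)$ are successful: by linearity of $f$ and $h_v$, the vector $f_\mu - f_\nu$ extracted in step 2 equals $f(h_{v_1}(\hat\mu_{v_1} - \hat\nu_{v_1}), \dots, h_{v_N}(\hat\mu_{v_N} - \hat\nu_{v_N}))$, and since $|h_{v_i}(\hat\mu_{v_i} - \hat\nu_{v_i})|_1$ is (up to a logarithmic distortion factor) a good estimate of $EEMD_{v_i}(\hat\mu_{v_i}, \hat\nu_{v_i})$, successfulness of $f$ gives that $b\_import\_sample$ selects $t$ with probability $P_t \geq 0.5\, |x_t|_1 / |\bigoplus_i x_i|_1$ where $x_i = h_{v_i}(\hat\mu_{v_i} - \hat\nu_{v_i})$. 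Combining this with the distortion bounds $EEMD_{v_i} \leq |x_i|_1 \leq O(\log|\Lambda_{v_i}| \log\log|\Lambda_{v_i}|)\, EEMD_{v_i}$ (which hold deterministically for dominating trees, after we fix the tree sets; strictly, the lower bound is deterministic and we may absorb the upper side into the definition of "successful" if needed), I get $P_t \geq q_t / \gamma$ for $\gamma = O(\log n \log\log n)$, which is the only hypothesis Lemma \ref{Lemma:importance-sampling} needs.

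The second step is to check that each term contributed to $Sum$ is an unbiased estimator of $Z$, i.e. $\E[EEMD_{v_t}(\hat\mu_{v_t},\hat\nu_{v_t})/P] = Z$. This is immediate: conditioned on the randomness inside one call to $b\_import\_sample$, the returned pair $(t,P)$ satisfies $P = P_t = \Pr[t \text{ selected}]$ exactly (this is the content of the preceding Claim), so $\E[Z_t/P_t] = \sum_t P_t \cdot Z_t/P_t = \sum_t Z_t = Z$. Note that the $EEMD_{v_t}$ value placed into $Sum$ in step 6 is computed exactly (by Hungarian algorithm on the small metric $d_{v_t}$), so there is no additional error there. Then the third step is to apply Lemma \ref{Lemma:importance-sampling} with $M = \lceil \log^2 n\rceil$ independent copies: we get $\Pr[\,|\,(1/M)\sum_{i=1}^M S_i - Z\,| > 0.5 Z\,] \leq e^{-\Omega(M/\gamma)} = e^{-\Omega(\log^2 n / (\log n\log\log n))} = e^{-\Omega(\log n/\log\log n)} = o(1)$, which is below $0.1$ for large $n$. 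Hence $Sum/\lceil\log^2 n\rceil \in [0.5\, EMD_T(\mu,\nu), 1.5\, EMD_T(\mu,\nu)]$ with probability at least $0.9$.

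I expect the main obstacle to be the bookkeeping around the logarithmic distortion factor $\gamma$: the $h_v$'s are only guaranteed to approximate $EEMD_v$ in \emph{expectation} over the random tree, not deterministically. To make $\gamma$ a fixed constant (times polylog) that works for all $v$ simultaneously, I would either (i) take, for each $v$, the full $O(|\Lambda_v|\log|\Lambda_v|)$-tree distribution of \cite{CCG98} so that the \emph{average} EEMD over those trees — which is exactly $|h_v(\cdot)|_1$ — is sandwiched between $EEMD_v$ and $O(\log|\Lambda_v|\log\log|\Lambda_v|)\,EEMD_v$ with probability $1$ over the choice of $\mu,\nu$ (the lower bound is deterministic, the upper bound holds since the average of dominating metrics has expected distortion bounded), or (ii) fold the residual randomness into a further conditioning event that holds with probability $0.95$ and union-bound it against the $0.9$ in the statement. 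Either way the argument is routine once the right quantity $|h_v(\cdot)|_1$ is identified as \emph{the} weight driving the binary sampling; the substance of the lemma is entirely in Lemma \ref{Lemma:importance-sampling} plus the successfulness hypothesis, and linearity of $f$ and of the $h_v$'s is what lets us run $b\_import\_sample$ on $f_\mu - f_\nu$ rather than needing $f(\hat\mu - \hat\nu)$ directly.
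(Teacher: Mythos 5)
Your proposal is correct and follows essentially the same route as the paper's proof: successfulness of $f$ gives $P_t \geq 0.5\card{x_t}_1/\card{\Directprod_i x_i}_1$ for $x_i = h_{v_i}(\hat\mu_{v_i}-\hat\nu_{v_i})$, the deterministic sandwich $EEMD_{v_i} \leq \card{x_i}_1 \leq O(\log n\log\log n)\,EEMD_{v_i}$ (which holds because $h_v$ averages over the \emph{entire} $O(\card{\Lambda_v}\log\card{\Lambda_v})$-tree distribution of \cite{CCG98}, exactly your option (i)) yields $P_t \geq q_t/\gamma$ with $\gamma = O(\log n\log\log n)$, and Lemma \ref{Lemma:importance-sampling} with $M = \lceil\log^2 n\rceil$ gives failure probability $e^{-\Omega(\log n/\log\log n)} \leq 0.1$. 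Your treatment is in fact more careful than the paper's (which elides the linearity step and the unbiasedness check), and your worry in the last paragraph is resolved the way you guessed.
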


\begin{proof}
If all occurrence of $f$ in $F_T(\mu - \nu)$ are successful, algorithm
\ref{Algorithm:binary-importance-sample} chooses $t$ with probability at least
\begin{eqnarray*}
& &\frac{0.5h_t(\hat \mu_t, \hat \mu_t)}{\sum_{i\in [N]}h_{v_i}(\hat\mu_i,
\hat\nu_i)}\\
&\geq& \frac{0.5 EEMD_t(\hat\mu_t, \hat\nu_t)}{O(\log n\log\log
n)\sum_{i \in [N]}EEMD_i(\hat\mu_i, \hat\nu_i)}\\
&=&\frac{1}{O(\log n\log\log n)}\frac{EEMD_t(\hat\mu_t, \hat\nu_t)}{EMD_T(\mu,
\nu)}
\end{eqnarray*}

By lemma \ref{Lemma:importance-sampling}, the probability that the algorithm
outputs a number between $0.5EMD_T(\mu, \nu)$ and $1.5EMD_T(\mu, \nu)$ is at
least 
$$1 - e^{-\Omega(\log^2n/\log n\log\log n)} \geq 0.9.$$
\end{proof}

\begin{proof}[Proof sketch of theorem \ref{Theorem:encoding-scheme},
\ref{Theorem:communication-protocol} and
\ref{Theorem:distance-oracle}]
We first prove theorem \ref{Theorem:encoding-scheme}, and then show how this
it implies theorem \ref{Theorem:communication-protocol} and
\ref{Theorem:distance-oracle}.

The preprocessing stage is almost the same as the algorithm for theorem
\ref{Theorem:approximation-algorithm-1}, except that we choose $s =
O(|\Lambda_v|\log |\Lambda_v|)$ dominating trees $\tau_{v, 1}, \tau_{v,2},
\cdots, \tau_{v,s}$ for $v$ using the algorithm in \cite{CCG98}.  The
preprocessing time is $\tilde O(n^2)$, and the size of the data structure is
$O(n^{1 + \e})$.

We can suppose we have $EMD_T$ approximates $EMD_X$ with $O(\alpha_X/\e)$
distortion, because this happens with high probability.

Then we compute the encoding $F_T(\mu)$ and $F_T(\nu)$ using formulas
(\ref{Equation:f}) and (\ref{Equation:FT}). The time to compute $f$ is $k$
times the dimension of the input, which is $\sum_{v \in U_T}\tilde
O(\card{\Lambda_v}^2) = O(n^{1 + \e})$ in $F_T(\mu)$. The second part of
$F_T(\mu)$ can be ignored. The time to compute the encoding is $\tilde O(kn^{1 +
\e})$. The size of the encoding is $O(Nk) = \tilde O(nk)$.

If we set $k$ to $O(\log^3 n)$,  then by lemma \ref{Lemma:sketch-size}, every
occurrence of $g$ is $O(1/\log n)$-good with probability $1-O(1/n^4)$. Totally,
there are $\tilde O(|U_T|) = \tilde O(n)$ occurrences of $g$ in $F_T(\mu -
\nu)$, thus, with probability $1-O(1/n^3)$, every occurrence of $g$ is
$O(1/\log n)$-good, which implies, by lemma \ref{Lemma:success-probability},
$f(h_{v_1}(\hat\mu_{v_1}), \cdots, h_{v_N}(\hat\mu_{v_N}))$ is successful. This
finally implies algorithm will output a number between $0.5EMD_T(\mu, \nu)$
and $1.5EMD_T(\mu, \nu)$ with probability at least 0.9, by lemma
\ref{Lemma:approximate-emd-over-slhst}.  The number
$O(\alpha_X/\e)$-approximates $EMD_X(\mu, \nu)$. 

Since $k = O(\log^3 n)$, the time need to compute an encoding is $\tilde
O(kn^{1 + \e}) = \tilde O(n^{1 + \e})$, and the size of an encoding is $\tilde
O(nk) = \tilde O(n)$. The binary importance sampling takes time $\tilde O(1)$,
The bottleneck of algorithm \ref{Algorithm:approximate-emd} is
computing the $EEMD_v$s, which takes total time $\tilde n^{O(\e)})$. In
algorithm
\ref{Algorithm:approximate-emd}, we don't really compute $F_T(\mu - \nu)$.
As the encoding is linear, whenever we want to read a number from $F_T(\mu
- \nu)$, we read the numbers in $F_T(\mu)$ and $F_T(\nu)$ and then do the
subtraction.

Theorem \ref{Theorem:encoding-scheme} immediately implies theorem
\ref{Theorem:communication-protocol}. $Alice$ computes
$F(\mu)$ and $Bob$ computes $F(\nu)$, then, they communicate to simulate
algorithm \ref{Algorithm:approximate-emd}. The communication complexity is
$O(n^\e)$, with approximation ratio $\alpha_X/\e$.

For theorem \ref{Theorem:distance-oracle}, we can store the encodings for all
the $s$ distributions in the preprocessing stage and when a query comes,
run algorithm \ref{Algorithm:approximate-emd}. We can run the preprocessing
stage $O(\log s)$ times independently and store $O(\log s)$ data structures so
that with high probability, the EMD between every pair is reserved.
\end{proof}

\section{Conclusions}
We have demonstrated a almost linear algorithm for estimating EMD over doubling
metrics up to a constant factor. We also derived an encoding-based
sub-linear time constant approximation algorithm, which may have further
applications.

An interesting direction to purse would be to find a sketching scheme for EMD
over doubling metrics. \cite{ABIW09}'s sketching scheme benefits from the fact
that all the small grids are the same. While in our case, the small metrics are
different. If we can embed the EMD over the small metrics to any uniform
normed space up to a constant factor, we can use the technique in \cite{ABIW09}
to derive a sketch scheme.

\bibliographystyle{plain}
\bibliography{reflist}

\appendix

\section{Proof of lemma \ref{Lemma:sketch-size}}
\begin{proof}
For a fixed $x$ and $i\in [k]$, the distribution of $g_i(x)$ is $\mathbb{C}(0,
|x|_1)$. Let $\gamma = |x|_1$, and
\begin{eqnarray*}
 f(t) = \left\{
\begin{array}{cc}
 0 & \mbox{if } t < 0\\
 \frac2\pi\frac{\gamma}{t^2 + \gamma^2} & \mbox{if }t \geq 0
\end{array}
\right.
\end{eqnarray*}
be the probability density function $|g_i|$s. $F$ is the correspondent
cumulative function :
\begin{eqnarray*}
 F(t) = \left\{
\begin{array}{cc}
 0 & \mbox{if } t < 0\\
 \frac{2}{\pi}\arctan\left(\frac{t}{\gamma}\right)& \mbox{if }t
\geq 0
\end{array}
\right.
\end{eqnarray*}

For some $0 < \delta < 0.1$, define $Y_i = 1$ if $g_i < F^{-1}(1/2 - \delta)$
or $g_i > F^{-1}(1/2 + \delta)$ and 0 otherwise. Using Chernoff bound, we have
$$\Pr\left[\sum_{i=1}^k Y_i \geq k/2 \right] \leq e^{-\delta^2k/8}$$
If $k = c/\delta^2$, the above probability is at most $e^{-\Omega(c)}$. For
small enough $\delta$, $F^{-1}(1/2 - \delta) = (1 - \Theta(\delta))\gamma$ and
$F^{-1}(1/2 + \delta) = (1 + \Theta(\delta))\gamma$. $\sum_{i = 1}^k Y_i < n/2$
implies the absolute median falls between $(1 - O(\delta))\gamma$ and $(1 +
O(\delta))\gamma$.
\end{proof}

\end{document}